\documentclass[11pt]{article}

\usepackage[T1]{fontenc}
\usepackage[utf8]{inputenc}
\usepackage{amsmath}
\usepackage{mathtools}
\usepackage{amsthm}
\usepackage{newtxtext}
\usepackage{newtxmath}
\usepackage{microtype}

\usepackage[letterpaper,margin=1in]{geometry}
\usepackage{setspace}
\usepackage{enumitem}
\setlist{itemsep=2pt,topsep=3pt,parsep=0pt}

\usepackage{graphicx}
\graphicspath{{figures/}}
\usepackage{booktabs}
\usepackage{array}
\usepackage{caption}
\usepackage{algorithm}
\usepackage{algpseudocode}
\usepackage{xcolor}
\definecolor{linkblue}{RGB}{20,80,160}

\usepackage{tikz}
\usetikzlibrary{positioning,arrows.meta,fit,backgrounds,calc,shapes.geometric,decorations.pathreplacing}

\usepackage[numbers,sort&compress]{natbib}
\usepackage[colorlinks=true,allcolors=linkblue]{hyperref}
\usepackage[capitalize,noabbrev]{cleveref}
\crefname{definition}{Definition}{Definitions}
\Crefname{definition}{Definition}{Definitions}
\crefname{proposition}{Proposition}{Propositions}
\Crefname{proposition}{Proposition}{Propositions}
\crefname{remark}{Remark}{Remarks}
\Crefname{remark}{Remark}{Remarks}

\theoremstyle{plain}
\newtheorem{theorem}{Theorem}[section]

\newtheorem{proposition}[theorem]{Proposition}
\newtheorem{corollary}[theorem]{Corollary}
\theoremstyle{definition}
\newtheorem{definition}[theorem]{Definition}
\newtheorem{remark}[theorem]{Remark}

\newcommand{\resolve}{\textsc{Resolve}}
\newcommand{\append}{\textsc{Append}}
\newcommand{\supersede}{\textsc{Supersede}}
\newcommand{\fold}{\textsc{Fold}}
\newcommand{\window}{\textsc{Window}}

\newcommand{\VII}{\textsc{vii}}
\newcommand{\bigoh}{O}
\DeclareMathOperator{\cover}{cover}
\newcommand{\cascade}{Cascade Log}

\title{\bfseries The \cascade{}: Reference-Stable Windowing\\ over Tiered Append Sequences}

\author{
  Faruk Alpay\thanks{Correspondence: \texttt{alpay@lightcap.ai}.}
  \qquad
  Levent Sar\i o\u{g}lu\\[4pt]
  \normalsize Department of Computer Engineering, Bah\c{c}e\c{s}ehir University, Istanbul, Turkey\\[1pt]
  \normalsize \texttt{\{faruk.alpay,\,levent.sarioglu\}@bahcesehir.edu.tr}
}

\date{}

\begin{document}
\maketitle

\begin{abstract}
\noindent
A long-running, append-mostly sequence of records---edits in an interactive
session, events in an event-sourced store, revisions in a version-controlled
working set---cannot be kept fully materialised, so it is layered into tiers: a
hot working set over progressively colder strata into which aged records are
folded and summarised. Records nonetheless carry stable handles that are
referenced across tier boundaries, and resolving a handle must stay correct as
its target migrates downward. We show that the obvious tiered design silently
violates this requirement at exactly the tier boundaries: under a realistic
access mixture a quarter of cross-window references resolve to a stale,
wrong, or absent record, and after sustained folding more than ninety-eight
percent of the history becomes unaddressable. The \cascade{} removes these
failures. A single persistent \emph{coalescing interval map} over handles is the
sole authority on each handle's live version; folding coalesces a contiguous run
into one interval node in a single split--merge splice. Our central result is a
tight characterisation of this structure by its \emph{fragmentation} $A$---the
number of index pieces, namely live handles plus maximal same-digest runs. The map
uses $\Theta(A)$ space, resolves a point in $\bigoh(\log A)$, reports a $k$-handle
range in $\bigoh(\log A + k)$, and updates in $\bigoh((a/B + s)\log A)$ over $a$
appends and $s$ supersedes; matching $\Omega(A)$ space and $\Omega(\log A + k)$ query
lower bounds hold for \emph{any} reference-stable, ordered, bounded-working-set
structure, and an adversary forces $A=\Theta(s)$. The index is thus
instance-optimal---$o(n)$ when appends dominate, $\Theta(s)$ under adversarial edits,
and then no structure does better. Resolution is total, snapshot-consistent, and
anomaly-free; the working set is independent of history length; and the design
transfers to external memory at $\bigoh(\log_\beta A)$ I/Os. A reference
implementation and a reproducible study to $10^{6}$ records, including adversarial and
distribution-free workloads and a copy-on-write ordered-map baseline, trace $A$
directly and confirm the bounds.
\end{abstract}

\section{Introduction}
\label{sec:intro}

Many systems are organised around a single growing sequence of immutable
records that is appended to far more often than it is read in full: the edit
history of an interactive session, the event log of an event-sourced
application, the revision stream of a version-controlled working set, the chunk
sequence of an incremental indexer, the tuple stream of a continuous analytic
query. The sequence quickly outgrows fast memory, so it is \emph{tiered}. A
bounded \emph{hot} stratum holds the recent records fully materialised; aged
records are \emph{folded} in contiguous blocks into colder strata, where a block
is replaced by a compact \emph{digest} that summarises it. Tiering is what keeps
the materialised working set bounded as the history grows without limit, and it
is the same idea that underlies log-structured storage~\citep{ONeil1996lsm}.

The records are not independent. Each carries a stable \emph{handle}, and
records reference earlier handles: an edit refers to the region it revises, an
event to the entity it mutates, a derived chunk to its source. A consumer
resolves a handle to obtain the record's current version, and assembles a
bounded \emph{window} of related records around a query. The difficulty is that
references cross tier boundaries. A handle minted while its target was hot is
followed later, after the target has been folded into a cold digest; a handle is
edited after the block holding it was already summarised; a window is assembled
while a fold is in flight. At each of these boundaries the obvious tiered design
fails in a way that is easy to miss because every individual tier is internally
consistent---the defect lives only in the transition between tiers.

We make the failure precise as three \emph{cross-tier reference anomalies}.
A \emph{dangling} resolution returns nothing for a handle that was appended and
never deleted, because folding reclaimed its addressing entry. A \emph{stale}
resolution returns a superseded version, because no tier holds the authoritative
one. A \emph{snapshot-skew} window mixes records read before and after a fold and
is therefore not a consistent view of any single instant. \Cref{sec:empirical}
quantifies them: with a hot stratum of a few hundred records over a history of
tens of thousands, a tiered log leaves under two percent of the history
addressable and mis-resolves roughly a quarter of references drawn from a
realistic access mixture.

This paper presents the \cascade{}, a tiered append structure for which
resolution is provably free of all three anomalies while the working set stays
bounded. Three ingredients combine:
\begin{itemize}
\item a \emph{versioned interval index} (\Cref{sec:vii})---one persistent search
  tree, shared across snapshots by path copying~\citep{DriscollSarnakSleatorTarjan1989,SarnakTarjan1986}---which spans every
  stratum and is the single authority on where each handle's current version
  lives, so resolution never depends on the order in which tiers are probed;
\item a \emph{coalescing interval map} (\Cref{sec:fold})---folding a contiguous
  run collapses it to a single interval node in one $\bigoh(\log n)$ splice, so
  compaction is amortised $\bigoh(1)$ per appended record and the index is
  sublinear, while a folded handle still resolves through its digest; and
\item \emph{snapshot tokens} (\Cref{sec:snap})---immutable roots that serve
  reads as of any past instant, making windows consistent by construction.
\end{itemize}
Windows are chosen (\Cref{sec:window}) by diffusing relevance from the query over
the reference graph with a personalised random walk~\citep{BrinPage1998,Haveliwala2003}
and then solving a standard budgeted maximum-coverage instance over the
candidates~\citep{NemhauserWolseyFisher1978,KhullerMossNaor1999}. The selection
layer applies known algorithms; our concern is the structure beneath it that
keeps the candidates resolvable, snapshot-consistent, and cheap to gather.

\paragraph{Contributions.}
The data-structural result is a tight characterisation of reference-stable
compaction by its \emph{fragmentation}.
\begin{enumerate}[label=(\roman*)]
\item The \emph{coalescing persistent interval map} (\Cref{sec:structure}): a
  persistent ordered map whose entries are intervals, where a folded contiguous run
  of handles is one node installed by a single split--merge splice, with a precisely
  specified update sequence---$\textsc{append}$, $\textsc{materialise}$,
  $\textsc{summarise}$, $\textsc{stab}$.
\item A \emph{tight fragmentation trade-off} (\Cref{thm:frag},
  \Cref{cor:instance})---the central result. With $A$ the number of index pieces
  (live handles plus same-digest runs), the map uses $\Theta(A)$ space, resolves in
  $\bigoh(\log A)$, reports a $k$-range in $\bigoh(\log A + k)$, and updates in
  $\bigoh((a/B+s)\log A)$; matching lower bounds force $\Omega(A)$ space and
  $\Omega(\log A + k)$ query for \emph{any} reference-stable, ordered,
  bounded-working-set structure, and an adversary drives $A=\Theta(s)$. The
  index is thus instance-optimal: $o(n)$ on append-dominated histories and
  $\Theta(s)$ under adversarial edits---and then so is any competitor's.
\item Supporting bounds: a space bound that versions cannot be lossily compressed
  below $\Omega(n\log t)$ bits (\Cref{thm:lower}), an ordered-range separation from
  hash indexes (Proposition~\ref{prop:order}), and a fully specified external-memory
  realisation with $\bigoh(\log_\beta A)$-I/O resolve and fold and
  $\bigoh(\log_\beta A + k/\beta)$ range (\Cref{thm:io}).
\item Correctness, conceded to be immediate from the single-authority invariant:
  resolution is total (\Cref{thm:total}), snapshot-consistent (\Cref{thm:linear}),
  and anomaly-free (\Cref{thm:anomaly}) under an explicit partial-persistence model;
  the work is the cost of maintaining the invariant, above.
\item A reference implementation, an oracle-checked anomaly model, and a reproducible
  study to $10^{6}$ records with adversarial and distribution-free
  workloads against a copy-on-write ordered-map (multiversion-B-tree) baseline
  (\Cref{sec:empirical}). A budgeted-window layer (\Cref{sec:window}) is included as
  an application; its guarantees are standard maximum coverage and not part of the
  core claim.
\end{enumerate}

\section{Model}
\label{sec:model}

\paragraph{Records, handles, versions.}
The structure stores a sequence of records appended over time. Appending a
record mints a fresh handle from a monotonically increasing counter; the $k$-th
appended record has handle $k$. A record may be \emph{superseded} by a later
version of the same handle, modelling an in-place edit, so a handle denotes a
logical record whose content evolves through versions $0,1,2,\dots$ Each handle
$h$ carries a set of out-references $R(h)$ to earlier handles; the directed graph
$G=(V,E)$ with $E=\{(h,h')\,:\,h'\in R(h)\}$ is the \emph{reference graph}.

\paragraph{Strata and folding.}
Records occupy a sequence of strata $T_0,T_1,\dots$ with the hot stratum $T_0$
holding the recent records fully materialised. When the number of materialised
records exceeds a capacity $C$, the oldest are demoted; once a block of $B$
demoted records accumulates it is \emph{folded} into a \emph{digest}, a
fixed-footprint object that summarises the block. A cost $c(h)>0$ (think bytes or
tokens) is charged for each materialised version; a digest costs $\bigoh(1)$
irrespective of $B$.

\paragraph{Operations.}
A structure supports $\append(p,c,R)\to h$, $\supersede(h,p,c)\to v$,
$\resolve(h,\tau)$ returning the version of $h$ live at instant $\tau$ (or the
current instant), and $\window(q,\beta)$ returning a budgeted set of records
relevant to a query handle $q$ within cost budget $\beta$.

\paragraph{Persistence model.}
\label{sec:persistence}
Updates ($\append$, $\supersede$, fold) apply only to the newest version and
produce a linear sequence of versions $D_0,D_1,\dots$; we index them by an
integer \emph{epoch}, the count of updates applied. The structure is therefore
\emph{partially persistent} in the sense of \citet{DriscollSarnakSleatorTarjan1989}:
any past epoch is read-only and queryable, but the history does not branch---there
is no operation that forks an alternate timeline. A \emph{snapshot token} names
an epoch; $\resolve(h,\tau)$ with a token reads version $D_\tau$. We make the
retention policy explicit because it governs both correctness and space:
\begin{itemize}
\item \emph{Ephemeral mode} retains only the current epoch. Past epochs are not
  queryable, and a version superseded by a later update becomes unreachable and
  is reclaimed; the space bound (\Cref{thm:complexity}) is stated for this mode.
\item \emph{Retained mode} keeps a caller-chosen set of tokens; the epochs they
  name remain queryable, and the path-copied state they reach is kept alive and
  \emph{counted} in the space bound. Reclamation is by reachability from the
  current root and the retained tokens (reference counting suffices, as the
  version graph is acyclic).
\end{itemize}
A digest is reachable, hence retained, exactly while some live or retained epoch
maps a handle to it; once every handle of a digest has been superseded and no
retained token predates those edits, the digest is reclaimed.
``Resolve against any past instant'' thus means any \emph{retained} epoch, not an
arbitrary one---no claim is made about epochs whose tokens were not kept.

\paragraph{Correct resolution.}
Let $\mathrm{ver}^\star_\tau(h)$ and $\mathrm{pay}^\star_\tau(h)$ be the true
version and payload of $h$ at instant $\tau$. A summarised resolution returns
$\big(h,\ \mathrm{ver}^\star_\tau(h),\ \sigma(\mathrm{pay}^\star_\tau(h))\big)$,
where $\sigma$ is the digest's summary map; it is \emph{version-faithful} always and
\emph{payload-faithful} exactly when $\sigma$ is the identity. A resolution
$\rho=\resolve(h,\tau)$ is \emph{correct} when $h$ exists at $\tau$ and $\rho$
reports the handle $h$ with version $\mathrm{ver}^\star_\tau(h)$: correctness and the
anomaly classification below concern identity and version only, a guarantee strictly
weaker than payload-faithful resolution, which holds iff $\sigma=\mathrm{id}$.
We classify an incorrect resolution as one of the following \emph{cross-tier
anomalies}.

\begin{definition}[Cross-tier anomalies]
\label{def:anomaly}
A resolution of an existing handle $h$ at instant $\tau$ is
\emph{dangling} if it returns nothing;
\emph{stale} if it returns a version $v<\mathrm{ver}^\star_\tau(h)$;
\emph{corrupt} if it returns a different handle or, for a materialised result, a
payload other than $\mathrm{pay}^\star_\tau(h)$.
A window is \emph{snapshot-skewed} if its members are resolved against two
instants separated by a fold, so that no single instant explains the window.
\end{definition}

These anomalies are the formal content of the informal claim that ``a hidden
defect appears in the transition between layers'': each is triggered only by a
reference whose target crosses a stratum boundary during folding.

\section{The \cascade{}}
\label{sec:structure}

\Cref{fig:arch} shows the structure. A single index spans all strata; folding
moves payloads downward but leaves the index total; immutable roots capture
snapshots.

\begin{figure}[t]
\centering
\begin{tikzpicture}[font=\small,>=Stealth]
  \node[draw, fill=gray!8, rounded corners, minimum width=23mm, minimum height=15mm,
        align=center] (map) at (-3.7,0) {versioned\\ interval map\\ \itshape(persistent)};
  \node[font=\scriptsize, left=5mm of map, align=center] (q)
        {$\resolve(h,\tau)$\\[1pt]$\window(q,\beta)$};
  \draw[->,thick] (q) -- (map.west);

  \node[draw, rounded corners, fill=blue!8, minimum width=50mm, minimum height=7.5mm] (hot) at (2.7,0.95) {};
  \foreach \i/\x in {1/-21,2/-13.5,3/-6,4/1.5,5/9,6/16.5}{
     \node[draw, fill=blue!20, minimum width=5mm, minimum height=5mm] (r\i) at ($(hot.center)+(\x mm,0)$) {};}
  \node[font=\scriptsize, anchor=north] at (hot.south) {hot stratum: live singletons (materialised)};
  \node[font=\scriptsize, right=1mm of hot] {$\leftarrow$ append};

  \node[draw, rounded corners, fill=orange!10, minimum width=50mm, minimum height=7.5mm] (cold) at (2.7,-1.25) {};
  \node[draw, fill=orange!22, minimum width=16mm, minimum height=5mm] (d1) at ($(cold.center)+(-11mm,0)$) {\scriptsize digest};
  \node[draw, fill=orange!22, minimum width=16mm, minimum height=5mm] (d2) at ($(cold.center)+(12mm,0)$) {\scriptsize digest};
  \node[font=\scriptsize, anchor=north] at (cold.south) {cold: each summarised run is one node $\to$ one digest};

  \draw[->, blue!60!black] (map.east) to[out=22,in=180] node[above,font=\scriptsize,pos=0.6]{stab} (r2.west);
  \draw[->, orange!70!black, dashed] (map.east) to[out=-22,in=180] node[below,font=\scriptsize,pos=0.6]{stab} (d1.west);

  \draw[decorate,decoration={brace,amplitude=4pt}]
     ($(r2.north west)+(0,4.5mm)$) -- node[above,font=\scriptsize,yshift=1pt]{$\window(q,\beta)$} ($(r4.north east)+(0,4.5mm)$);
  \draw[gray!55] (r2.north west) -- ($(r2.north west)+(0,4.5mm)$);
  \draw[gray!55] (r4.north east) -- ($(r4.north east)+(0,4.5mm)$);
\end{tikzpicture}
\caption{The \cascade{}. Appends enter the hot stratum $T_0$; aged blocks fold
into digests in colder strata. The versioned interval map spans the working set
and the digests and resolves every handle to its live version---a materialised hot
record (solid) or a digest reached by stabbing its summarised interval node
(dashed). Immutable roots of the map serve snapshot-consistent windows.}
\label{fig:arch}
\end{figure}

\subsection{The versioned interval map}
\label{sec:vii}

The index is a persistent ordered map over handles whose entries are
\emph{intervals}, not individual keys. A live handle is a singleton interval
$[h,h]$ carrying its current version and payload; a maximal contiguous run of
summarised handles is a \emph{single} interval node $[a,b]$ that names a digest
and stands for every handle it spans. We call it the versioned interval map
(\VII{}). It supports four operations:
\begin{itemize}
\item $\textsc{stab}(h)$ -- locate the interval containing $h$ ($\bigoh(\log n)$);
\item $\textsc{append}(h)$ -- add a singleton for a new maximal handle;
\item $\textsc{materialise}(h)$ -- make $h$ a live singleton, \emph{splitting} the
  run that holds it into the parts below and above $h$; and
\item $\textsc{summarise}(a,b,d)$ -- replace the live singletons of a contiguous
  range by one interval node pointing at digest $d$.
\end{itemize}
The last is a genuine interval (range) update and the first a point-stabbing
query, which is what makes the structure an interval object rather than a
per-handle dictionary: a folded run of any length occupies one node, so the map
holds one node per live handle plus one per summarised run---sublinear in $n$
when folding dominates (\Cref{sec:empirical}).

The map is \emph{persistent} by path copying: every update allocates
$\bigoh(\log n)$ fresh nodes along one root-to-leaf path and shares all other
subtrees with the previous version, so the previous root remains a valid,
immutable view~\citep{DriscollSarnakSleatorTarjan1989,SarnakTarjan1986}. The
reference implementation uses a treap with priorities a fixed hash of the
interval's lower endpoint, giving expected height $\bigoh(\log n)$ with
reproducible shape~\citep{SeidelAragon1996treaps}; any persistent balanced search
tree yields worst-case $\bigoh(\log n)$ and may be substituted. Because the
\VII{} is the sole record of each handle's live location, resolution consults one
authority rather than probing strata in order; this is the source of the absence
of stale and snapshot-skew anomalies.

\subsection{Folding by coalescing runs}
\label{sec:fold}

When records age out of the working set they are demoted; once a block of $B$ has
accumulated it is folded. Folding a maximal contiguous run $[a,b]$ of demoted
singletons is a single $\textsc{summarise}$: the run is split out of the map and
one interval node spliced back in its place, in $\bigoh(\log n)$ \emph{regardless
of} $b-a+1$ (\Cref{alg:fold}). An append-dominated stream demotes contiguous
handles, so a whole block folds in one splice; edits that re-materialise old
handles break runs, and each such fragmentation is one extra splice charged to the
edit (\Cref{thm:complexity}). The bulky payloads leave the working set; what is
retained, per folded handle, is its exact version inside the digest.

\begin{algorithm}[t]
\caption{$\fold$ of a demoted block: one splice per contiguous run}
\label{alg:fold}
\begin{algorithmic}[1]
\Require demoted block $\{(h_i,v_i)\}$; current root $r$
\State drop any $h_i$ re-edited since demotion (it is live again); sort the rest
\For{each maximal contiguous run $[a,b]$ of the survivors}
  \State $d \gets \{\,h \mapsto (v_h,\ \textsc{summary}(\text{payload}_h))\ :\ h\in[a,b]\,\}$ \label{line:digest}
  \State $r \gets \VII.\textsc{summarise}(r,a,b,d)$ \Comment{one $\bigoh(\log n)$ splice; $b{-}a{+}1$ nodes $\to 1$} \label{line:splice}
  \State release the run's payloads from the working set
\EndFor
\State \Return updated root $r$
\end{algorithmic}
\end{algorithm}

Line~\ref{line:splice} is the mechanism: a run collapses to one node in a single
split/merge, so the index work to fold a block is one splice per contiguous run,
not one update per handle. A design that instead rewrites each handle's entry---a
hash spine (\Cref{sec:impl}), or any per-key forwarding scheme---keeps a linear
index and does $\Theta(B)$ index work per block; \Cref{thm:complexity} charges the
difference.

\paragraph{Digest semantics.}
A digest stores, for each handle it covers, that handle's \emph{exact} version and
a faithful summary of its payload (line~\ref{line:digest}); the summary is a
function of the folded bytes, not an arbitrary value. Resolution therefore returns
the correct handle and version for a summarised record always, and the summary in
place of the original bytes---the only approximation folding introduces, and one
an application opts into by choosing $\textsc{summary}$ (the identity map makes
the digest lossless). A per-digest membership filter~\citep{Bloom1970} only
accelerates negative probes across digests; it never participates in correctness,
which is decided by the map, so a filter false positive costs one wasted digest
lookup and never a wrong answer.

\subsection{Snapshots and resolution}
\label{sec:snap}

A snapshot is taken by retaining the current immutable root as a token; holding a
token keeps that version of the \VII{} alive. \Cref{alg:resolve} resolves a
handle against the current root or any token in $\bigoh(\log n)$: it locates the
entry and returns the materialised payload (live) or the digest-backed view
(summarised). Resolving every member of a window against one token yields a view
of a single instant.

\begin{algorithm}[t]
\caption{$\resolve(h,\tau)$}
\label{alg:resolve}
\begin{algorithmic}[1]
\Require handle $h$; instant token $\tau$ (default: current root)
\State $I \gets \VII.\textsc{stab}(\tau, h)$ \Comment{the interval containing $h$}
\If{$I=\varnothing$} \Return $\varnothing$ \Comment{$h$ not present at $\tau$} \EndIf
\If{$I$ is a live singleton} \Return $\langle h, I.\text{version}, I.\text{payload}, \text{materialised}\rangle$ \EndIf
\State $(v_h, s_h) \gets \text{digest}[\,I.\text{digest}\,][h]$
\State \Return $\langle h, v_h, s_h, \text{summarised}\rangle$
\end{algorithmic}
\end{algorithm}

Appending splices a fresh maximal singleton in one pass; superseding makes the
handle a live singleton, so the new version is authoritative the instant it is
written, regardless of any older copy lingering in a digest.

\subsection{Cross-tier range reporting}
\label{sec:range}

Reporting all handles in a key range that straddles the working set and several
digests is an in-order traversal of the \VII{} between two endpoints, expanding
each summarised interval it meets, and costs $\bigoh(\log n + k)$ for $k$ reported
handles. A per-stratum design would search each stratum's index separately; the
classical way to recover a single logarithmic search across ordered layers is
fractional cascading~\citep{ChazelleGuibas1986}, and layered probabilistic search
lists are an alternative~\citep{Pugh1990skiplists}. The \cascade{} attains the
$\bigoh(\log n + k)$ bound directly because its one ordered map already spans the
working set and the digests. A hash index over handles cannot report a key range
at all without enumerating its whole contents.

\subsection{Application: budgeted windows}
\label{sec:window}

This subsection is an application layer over the structure, not part of the core
data-structural contribution; its guarantees are inherited from budgeted maximum
coverage and a reader interested only in the index may skip it.

A window around a query handle $q$ is chosen in two stages. First, relevance is
diffused from $q$ over the reference graph by a personalised random walk with
restart on $q$---a personalised PageRank whose stationary distribution
concentrates on the citation neighbourhood of
$q$~\citep{BrinPage1998,Haveliwala2003}. Second, among the candidates a subset
$S$ is chosen to maximise the covered relevance
$f(S)=\sum_{v\in\cover(S)} w(v)$, where $\cover(u)$ is $u$ together with the
handles it references and $w$ is the diffused weight, subject to
$\sum_{u\in S}c(u)\le\beta$. Since $f$ is a weighted coverage function---monotone
and submodular---the task is exactly \emph{budgeted maximum
coverage}~\citep{KhullerMossNaor1999}. Two algorithms apply. The linear-time
cost--benefit greedy of \Cref{alg:window}, which returns the better of its
density-greedy run and the best single feasible record, attains $\tfrac12(1-1/e)$
of the optimum; the partial-enumeration of \Cref{alg:window-partial}, which seeds
the greedy from every feasible set of at most three records, attains $1-1/e$. The
prefix heuristic---taking records in decreasing relevance until the budget is
spent---ignores both cost-effectiveness and the overlap between revealed contexts,
and has no constant-factor guarantee (Proposition~\ref{prop:adversarial}).

The guarantee is over the candidate pool $\mathcal{C}(q)$ the diffusion produces and
the cost--coverage instance on it; it does \emph{not} bound the loss from candidate
generation---from truncating the personalised-PageRank neighbourhood, or from an
optimal record lying outside $\mathcal{C}(q)$. The cost $c(u)$ is a single static
weight standing for a record's logical size, its materialisation cost, and the
digest lookup a summarised record incurs; budgeted coverage treats it as fixed,
while the live structure has tier-dependent access costs, so the guarantee is exact
for the abstracted instance and indicative for the system. This selection layer is
standard---a reduction to budgeted maximum coverage with its known
guarantees~\citep{NemhauserWolseyFisher1978,KhullerMossNaor1999}; the contribution
here is the structure beneath it that makes candidates resolvable, snapshot-consistent,
and cheap to gather (\Cref{sec:range}), and Proposition~\ref{prop:adversarial} is the
textbook knapsack--coverage gap, included only to show the prefix heuristic is unsafe.

\begin{algorithm}[t]
\caption{$\window(q,\beta)$: simple greedy, $\tfrac12(1-1/e)$}
\label{alg:window}
\begin{algorithmic}[1]
\State $w \gets \textsc{PersonalizedPageRank}(G,\,\text{seed}=q)$
\State $\text{candidates}\gets$ resolvable handles near $q$; \; $\cover(u)\gets\{u\}\cup R(u)$
\State $S\gets\varnothing$; \; $U\gets\varnothing$; \; $\text{spent}\gets0$
\While{some feasible $u$ remains}
  \State $u^\star \gets \arg\max_{u:\ \text{spent}+c(u)\le\beta}\ \dfrac{\sum_{v\in\cover(u)\setminus U} w(v)}{c(u)}$
  \State $S\gets S\cup\{u^\star\}$; \; $U\gets U\cup\cover(u^\star)$; \; $\text{spent}\mathrel{+}=c(u^\star)$
\EndWhile
\State \Return $\arg\max\{\,f(S),\ \max_{u:\,c(u)\le\beta} f(\{u\})\,\}$, materialised via \resolve
\end{algorithmic}
\end{algorithm}

\begin{algorithm}[t]
\caption{$\window^{+}(q,\beta)$: partial enumeration, $1-1/e$}
\label{alg:window-partial}
\begin{algorithmic}[1]
\State compute $w$, $\text{candidates}$, $\cover$ as in \Cref{alg:window}
\State $S^\star\gets\varnothing$
\For{each feasible seed $H\subseteq\text{candidates}$ with $|H|\le 3$}
  \State $S\gets H$
  \While{some feasible $u\notin S$ remains}
     \State $S\gets S\cup\{\arg\max_{u}\ \tfrac{1}{c(u)}\!\sum_{v\in\cover(u)\setminus\cover(S)} w(v)\}$ \Comment{$\text{spent}+c(u)\le\beta$}
  \EndWhile
  \If{$f(S)>f(S^\star)$} \State $S^\star\gets S$ \EndIf
\EndFor
\State \Return $S^\star$, materialised via \resolve
\end{algorithmic}
\end{algorithm}

\section{Analysis}
\label{sec:analysis}

Throughout, $n$ is the number of handles ever appended.

\begin{theorem}[Totality]
\label{thm:total}
For every handle $h$ and every instant $\tau$ at or after $h$'s append,
$\resolve(h,\tau)$ returns a defined view with identity $h$.
\end{theorem}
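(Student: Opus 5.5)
The plan is to prove totality by induction on the epoch, establishing an invariant strictly stronger than the statement. Write $D_\tau$ for the version of the \VII{} at epoch $\tau$, and let $n_\tau$ be the number of handles appended by $\tau$. The invariant (I) is that the intervals stored in $D_\tau$ are pairwise disjoint and their union is exactly $\{1,\dots,n_\tau\}$. Each interval must also be one of two kinds. Either it is a live singleton $[h,h]$ carrying a version and payload, or it is a summarised node $[a,b]$ whose digest $d$ has an entry $d[h]$ for every $h\in[a,b]$. Given (I), the theorem follows by reading \Cref{alg:resolve}. Take $\tau$ at or after $h$'s append, so $h\le n_\tau$. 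Then $\textsc{stab}(\tau,h)$ returns the unique interval containing $h$, so the $I=\varnothing$ branch is never taken. Both remaining branches return a tuple whose first component is $h$ itself, not a field read from the node. In the summarised branch the lookup $\text{digest}[I.\text{digest}][h]$ is defined by the second half of (I).

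For the base case, $D_0$ is empty and $n_0=0$, so (I) holds vacuously. For the inductive step I would check each update against (I).
\begin{itemize}
\item $\append$ inserts $[n_\tau+1,n_\tau+1]$ as a live singleton. The new handle is maximal, so disjointness and coverage of $\{1,\dots,n_\tau+1\}$ are preserved.
\item $\supersede$ of $h$ calls $\textsc{materialise}(h)$. If $h$ is already live, only the singleton's fields are rewritten. If $h$ lies in a run $[a,b]$, the run is replaced by $[a,h-1]$, $[h,h]$ and $[h+1,b]$, with empty pieces dropped. The two outer pieces keep the same digest, whose entries still cover them, so the partition is refined and still covers $[a,b]$.
\item A fold runs \Cref{alg:fold}. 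After dropping re-edited handles it applies $\textsc{summarise}(a,b,d)$ only to maximal contiguous runs of handles that are currently live singletons. Line~\ref{line:digest} builds $d$ with an entry for every $h\in[a,b]$. The split removes exactly the singletons $[a,a],\dots,[b,b]$ and the merge inserts $[a,b]$, so coverage and disjointness are preserved.
\end{itemize}
Because updates touch only the newest version, these checks establish (I) for every epoch.

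Past epochs need one further argument. Path copying never mutates shared nodes, so the root named by token $\tau$ still denotes exactly $D_\tau$. Under the retention model (\Cref{sec:persistence}), reclamation is by reachability from retained roots. It therefore never frees a node or digest reachable from a retained $D_\tau$; in particular, a digest is kept while some retained epoch maps a handle to it. The instant $\tau$ in the statement is read as a retained epoch or the current one, as the model stipulates.

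I expect the fold step to be the main obstacle. It needs the precondition that every handle in a run $[a,b]$ is currently a live singleton and not already part of a summarised node; otherwise the split would leave an overlap or a gap. I would establish this from the order of operations in \Cref{alg:fold}. Demoted handles are live when demoted, and any handle re-edited after demotion is dropped in the first line. Remaining handles can change state only by folding, and each is folded once. To be safe, I would run the argument on a version of the fold step in which $\textsc{summarise}$ is guarded by checking that every handle in the run stabs to a live singleton, so that the invariant holds even under concurrent edits.
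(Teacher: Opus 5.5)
Your proposal is correct and follows essentially the same route as the paper: an invariant that every appended handle is covered by some entry of the map, checked against append, supersede, and fold, and carried to retained past epochs because path copying never mutates a shared node. Your strengthening to a disjoint partition of $\{1,\dots,n_\tau\}$ with complete digest entries, and your explicit check that a fold splices out only live singletons, make precise what the paper's proof asserts in a line. The guarded $\textsc{summarise}$ you hold in reserve is unnecessary and would change the algorithm being analysed: the model is single-writer, and line~1 of \Cref{alg:fold} already secures the precondition, as your own argument shows.
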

\begin{proof}
We show the invariant that the live root contains an entry for every appended
handle. Appending inserts a live singleton. Superseding makes the handle a live
singleton, preserving presence. Folding (\Cref{alg:fold}, line~\ref{line:splice})
replaces a run of live singletons by one summarised interval node naming a digest
that retains every handle's version; the handles stay covered and the digest is
never discarded. No operation removes a handle from the map. Hence the live root,
and by persistence every root derived from it at an instant
$\tau\ge\text{append}(h)$, covers $h$, on which \Cref{alg:resolve} returns a
defined view with identity $h$.
\end{proof}

\begin{theorem}[Snapshot consistency]
\label{thm:linear}
Fix a retained token naming epoch $\tau$. Every resolution against that token
returns the value prescribed by the canonical sequential execution truncated
after its first $\tau$ updates. Consequently a window all of whose members are
resolved against one token is a consistent snapshot of epoch $\tau$ and is never
snapshot-skewed.
\end{theorem}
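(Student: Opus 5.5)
The plan is to define an abstraction relation between the root stored at each epoch and the state of the canonical sequential execution, prove by induction on the number of updates that the relation holds at every epoch, and then read off both claims from immutability of roots under path copying. Concretely, let $S_\tau$ be the abstract state after the first $\tau$ updates of the canonical sequential execution. $S_\tau$ maps each appended handle $h$ to the pair $(\mathrm{ver}^\star_\tau(h),\mathrm{pay}^\star_\tau(h))$ together with a flag saying whether $h$ is materialised or folded. Write $r_\tau$ for the root stored at epoch $\tau$. The relation $r_\tau \models S_\tau$ asserts that for every handle $h$ of $S_\tau$, $\textsc{stab}(r_\tau,h)$ returns exactly one interval. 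That interval is either a live singleton carrying $(\mathrm{ver}^\star_\tau(h),\mathrm{pay}^\star_\tau(h))$, or a summarised node whose digest maps $h$ to $(\mathrm{ver}^\star_\tau(h),\sigma(\mathrm{pay}^\star_\tau(h)))$. Handles not yet appended must stab to nothing.

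The inductive step goes case by case over the three update kinds.
\begin{itemize}
\item $\append$ inserts a fresh maximal singleton and touches no other interval.
\item $\supersede$ performs $\textsc{materialise}$. It splits the run containing $h$ into the parts below and above $h$ and installs a singleton with the new version. Handles in the split-off parts keep the same digest pointer, so their stabs still agree with the digest.
\item Fold runs \Cref{alg:fold}. It first drops re-edited handles, so every summarised handle is one whose current version is still the demoted one. The digest built at line~\ref{line:digest} records exactly $(\mathrm{ver}^\star,\sigma(\mathrm{pay}^\star))$ for each handle, and the splice at line~\ref{line:splice} changes stab results only inside $[a,b]$.
\end{itemize}
Totality of coverage comes from the invariant in \Cref{thm:total}. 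Disjointness of intervals needs a short check that $\textsc{summarise}$ is applied only to ranges currently consisting of live singletons, which the maximal-run construction guarantees.

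Next comes persistence. Path copying allocates fresh nodes for every later update and never mutates a node reachable from $r_\tau$. Digests are immutable once built and, by the retention policy of \Cref{sec:persistence}, are kept alive while some retained token reaches them. So for a retained token $\tau$, the tree and digests reachable from $r_\tau$ are bitwise identical at any later time. The relation $r_\tau\models S_\tau$ therefore continues to hold, and \Cref{alg:resolve} run on $r_\tau$ returns the value prescribed by $S_\tau$. For the window corollary, every member is resolved against the single root $r_\tau$, so the window is explained by the single instant $\tau$. Any intervening fold produces a new root and has no effect on $r_\tau$, so the snapshot-skew condition of \Cref{def:anomaly} cannot arise.

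The main obstacle I expect is the fold case. Folding is not atomic at the level of \Cref{alg:fold}: it loops over several runs, each performing its own splice. I would treat the whole fold as one epoch by defining the epoch boundary as the root returned at the end of the algorithm, so that intermediate roots are never issued as tokens. I would then show that the intermediate splices commute on disjoint runs. A second, smaller point is that the digest must be constructed from the current version at fold time; the dropping step on line~1 is what guarantees this, and I would cite it explicitly.
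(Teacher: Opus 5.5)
Your argument is correct, but it proves more than the paper's proof does and takes a different route.

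\textbf{The paper's route.} The paper's proof is purely an immutability argument. Path copying never mutates an existing node, so the root at epoch $\tau$ is, node for node, the tree produced by the first $\tau$ updates, and any read against it sees exactly that state. ``The value prescribed by the canonical sequential execution'' is read as the structure's own state after $\tau$ updates. The fact that this state carries $\mathrm{ver}^\star_\tau(h)$ is left to \Cref{thm:anomaly}.

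\textbf{Your route.} You set up an abstraction relation $r_\tau\models S_\tau$ against the abstract state. You establish it by induction over append, supersede and fold, and use immutability only to freeze it. This is a refinement proof that absorbs the stale/corrupt half of \Cref{thm:anomaly} into this theorem.

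\textbf{What your route buys.}
\begin{enumerate}
\item The conclusion is stated against the true versions and summarised payloads, not against the implementation's own state.
\item It handles digests, which the paper's ``node for node'' sentence passes over. A summarised resolution reads a digest, not only tree nodes, so digest immutability under the retention policy is needed. Note that the per-entry reclamation in \Cref{thm:complexity}(e) is itself a mutation of a digest. Your claim ``digests are immutable once built'' therefore needs the retention policy of the persistence model to defer entry deletion, not just whole-digest reclamation, while a retained token predates the re-materialising edit.
\item It handles the fact that a fold performs several splices, so epochs must be taken at fold granularity. Once intermediate roots are never issued as tokens, the commutation claim you planned is unnecessary. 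The root after the final splice is simply the next epoch.
\end{enumerate}

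\textbf{What the paper's route buys.} It is brief and cleanly separates concerns: consistency comes from persistence, and correctness comes from the single-authority invariant argued in \Cref{thm:anomaly}.
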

\begin{proof}
A token is an immutable root produced by path copying: each update derives a new
root from the previous one without mutating any existing
node~\citep{DriscollSarnakSleatorTarjan1989}. Hence the root at epoch $\tau$ is,
node for node, the tree obtained by applying exactly the first $\tau$ updates of
the linear history (\Cref{sec:persistence}); reading any set of handles against it
returns the state after update $\tau$ and no later update. A window built from one
token therefore cannot mix pre- and post-fold states.
\end{proof}

\begin{remark}[Concurrent readers]
\label{rem:linear}
The structure as specified is sequential, but its immutability gives the
single-writer, many-reader model for free, and that model is exactly the read path
of multiversion concurrency control~\citep{BernsteinGoodman1983}: a writer serialises
updates and publishes each new root with one atomic pointer store, while readers
load the pointer once and traverse the immutable tree it names. Readers never block,
never coordinate, and each observes a consistent snapshot; every operation
linearises at its publish (writer) or load (reader), so the object is linearizable
in the sense of~\citet{HerlihyWing1990}. This is the regime the motivating
applications read in---snapshot-isolation queries and time-travel reads run against
exactly such immutable versions---and the working set keeps a writer's update local.
A multi-writer, lock-free design, where concurrent writers merge into one version
order, is genuinely open and is the main limitation of the present primitive.
\end{remark}

\begin{theorem}[Anomaly-freedom]
\label{thm:anomaly}
No execution of the \cascade{} produces a dangling, stale, corrupt, or
snapshot-skewed resolution.
\end{theorem}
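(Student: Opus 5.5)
The plan is to check the four anomaly classes of \Cref{def:anomaly} one at a time. Each case should reduce to \Cref{thm:total}, to \Cref{thm:linear}, or to a single invariant of the map. That invariant should be isolated first and proved by induction over the linear update history of \Cref{sec:persistence}.

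\emph{Single-authority invariant.} At every epoch $\tau$ and for every handle $h$ appended by $\tau$, exactly one of the following holds. Either the root at $\tau$ holds a live singleton $[h,h]$ carrying $\mathrm{ver}^\star_\tau(h)$ and $\mathrm{pay}^\star_\tau(h)$. Or it holds exactly one summarised interval containing $h$, whose digest maps $h$ to $\big(\mathrm{ver}^\star_\tau(h),\sigma(\mathrm{pay}^\star_\tau(h))\big)$. The induction has three update cases.
\begin{itemize}
\item $\append$ adds a fresh maximal singleton with version $0$ and does not touch other entries.
\item $\supersede$ calls $\textsc{materialise}$. This splits the containing run, so the old digest entry for $h$ is no longer reachable from the new root, and installs a singleton with the new version; other handles keep their intervals.
\item Fold first drops re-edited handles (line 1 of \Cref{alg:fold}). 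It then builds each digest from the current live singletons (line~\ref{line:digest}), copying version and summarised payload exactly, and replaces precisely those singletons by one interval (line~\ref{line:splice}).
\end{itemize}
Disjointness of intervals holds because $\textsc{summarise}$ and $\textsc{materialise}$ are splices that remove what they replace. By the path-copying argument already used for \Cref{thm:linear}, the invariant at epoch $\tau$ is preserved in the retained root for $\tau$.

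With the invariant in hand, the four cases follow from \Cref{alg:resolve}. \emph{Dangling:} \Cref{thm:total} gives a nonempty stab for every existing handle, and the returned identity is $h$. \emph{Stale:} $\textsc{stab}$ returns the unique interval containing $h$, and by the invariant its version is $\mathrm{ver}^\star_\tau(h)$. No older copy lingering in a digest is consulted, because that digest is not reachable through any interval containing $h$ at $\tau$. \emph{Corrupt:} the returned identity is the queried $h$. A materialised result returns the singleton's payload, which equals $\mathrm{pay}^\star_\tau(h)$. A summarised result is exempt by definition, since only identity and version are required there. \emph{Snapshot-skew:} $\window$ resolves all members through one token, so \Cref{thm:linear} applies directly.

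The main obstacle is the stale case in the presence of re-edits racing with folds. Suppose a handle is superseded after demotion but before its block folds, or re-edited after it was already summarised. I must show the stale digest entry can never be reached. I would handle this through two facts: the drop step of \Cref{alg:fold}, and the fact that $\textsc{materialise}$ carves $h$ out of its run. Together these give disjoint intervals, so the stab has a unique answer. I would state this disjointness explicitly as the invariant's second clause rather than leave it implicit.
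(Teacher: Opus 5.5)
Your proposal is correct and follows the paper's proof: dangling is excluded via \Cref{thm:total}; stale and corrupt follow because the root at $\tau$ carries the version written by the most recent append or supersede of $h$, which folding copies verbatim into the digest; and snapshot-skew is excluded via \Cref{thm:linear}. Your explicit inductive invariant with a disjointness clause is a more careful rendering of the paper's one-line claim. One small correction: after $\textsc{materialise}$ the old digest itself stays reachable through the split fragments of the run. What matters, as you say later, is only that no interval containing $h$ points to it.
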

\begin{proof}
Dangling is excluded by \Cref{thm:total}. For stale and corrupt, fix a token at
instant $\tau$ and a handle $h$; in that root $h$ carries the version written by
the most recent append or supersede of $h$ at or before $\tau$, since both write
$h$ as a live singleton and folding copies the version verbatim into the digest.
\Cref{alg:resolve} returns exactly that version and the handle $h$, with a payload
that is the materialised value when live and the faithful digest summary when
summarised; hence the version equals $\mathrm{ver}^\star_\tau(h)$, the
identity is $h$, and a materialised payload equals $\mathrm{pay}^\star_\tau(h)$.
So no resolution is stale or corrupt. Snapshot-skew is excluded by
\Cref{thm:linear}.
\end{proof}

The three guarantees above are by construction: a single authoritative map that
never discards a version makes them near-immediate, and we claim no subtlety in
the invariant proofs. The substance is the \emph{cost} at which they hold
(\Cref{thm:complexity}) and the sense in which that cost is forced. How cheaply
can a folded handle's \emph{version} be kept? \Cref{thm:lower} answers this and is
best read as a space statement that no lossy payload summary can evade: a structure
free to summarise payloads arbitrarily must still retain $\Omega(n\log t)$ bits to
recover exact versions across folds. It is a \emph{supporting} bound on the cold
digests---it shows versions cannot be lossily compressed---and is deliberately not the
structural claim: the bound that pins down the \emph{index} and establishes optimality
of the interval representation is the fragmentation trade-off (\Cref{thm:frag}), and
the cost of \emph{ordered} access is \Cref{prop:order}. \Cref{thm:lower} rules out only
the economy of summarising versions along with the payloads they annotate.

\begin{theorem}[Space lower bound for exact versions across folds]
\label{thm:lower}
Work in the bit-probe model: a data structure is a string of memory bits, and an
operation reads and writes bits at addresses of its choosing. Consider a block of
$B$ handles, each of which the workload may independently leave unedited or
supersede to one of $t-1$ distinguishable later versions ($t\ge 2$). Let $S$ be the
memory bits the structure retains after the block is folded and reads while
answering a later $\resolve$ of a block handle for its \emph{version}. If every such
$\resolve$ returns the handle's current version for all $t^{B}$ edit
patterns---whatever \textup{(}possibly lossy\textup{)} summary of the payloads the
structure also keeps---then $|S|\ge B\log_2 t$; over $m$ disjoint folded blocks the
retained version state is $\Omega(n\log t)$ bits.
\end{theorem}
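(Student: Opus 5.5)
The plan is a counting (pigeonhole) argument over the $t^{B}$ edit patterns, with the bit set $S$ treated as the only channel through which a version can reach a later $\resolve$. Fix the block's $B$ handles and index the workloads by $\pi\in\{0,1,\dots,t-1\}^{B}$: coordinate $\pi_i=0$ leaves handle $h_i$ unedited, and $\pi_i=j\ge 1$ supersedes it to the $j$-th distinguishable later version. Run the same fold on every pattern, and let $M_\pi$ be the memory contents afterwards restricted to the address set $S$. The first step is to make $S$ a fixed set, not one that depends on $\pi$. I would take $S$ to be the union, over all patterns and all $B$ version queries, of the addresses read. If adaptive probing leaves this ambiguous, I would note that any adaptive query can be simulated non-adaptively by reading all of $S$, so $M_\pi$ determines every answer.

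The second step is a decoding argument. By hypothesis, answering $\resolve(h_i)$ for version reads only bits in $S$, and the result is a deterministic function of those bits, so it is a function $g_i(M_\pi)$. Correctness for every pattern gives $g_i(M_\pi)=$ the version encoded by $\pi_i$. The map $M_\pi\mapsto(g_1(M_\pi),\dots,g_B(M_\pi))=\pi$ therefore inverts $\pi\mapsto M_\pi$, so that map is injective. Injectivity forces at least $t^{B}$ distinct strings in $\{0,1\}^{|S|}$, which gives $2^{|S|}\ge t^{B}$ and hence $|S|\ge B\log_2 t$. The payload summary plays no role in this step: whatever lossy summary is stored, its bits either lie in $S$ and are counted, or lie outside $S$ and are never read for versions, so they cannot help.

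For the aggregate bound, I would run $m=\lfloor n/B\rfloor$ disjoint blocks, each with its own independent edit pattern. The joint pattern space has size $t^{mB}$, and the union of the per-block read sets must decode all of them, so the same injectivity argument gives total retained version state at least $mB\log_2 t=\Omega(n\log t)$ bits. I would run the counting once over the whole joint pattern space, rather than summing per-block bounds, so that no disjointness of the per-block address sets is needed.

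The main obstacle is making the second step rigorous. Two things must be pinned down. First, ``reads while answering'' must mean that the version answer is a function of the bits at addresses in $S$ alone; probes at other addresses that carry information would break the injection. Second, the query and any of its writes must not depend on information outside $S$. I would handle both by fixing $S$ as the full set of addresses the query procedure may read across all patterns. I would also restrict to deterministic procedures, or fix the random coins in advance, since the paper requires correctness for every pattern.
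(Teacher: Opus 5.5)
Your proposal is correct and is essentially the paper's argument: the $B$ version answers are a deterministic function of the contents of $S$ and recover the edit pattern, so the map from patterns to the contents of $S$ is injective on $t^{B}$ inputs and $|S|\ge B\log_2 t$. Your one departure is to count once over the joint space of $t^{mB}$ patterns for the aggregate bound, and this is slightly more rigorous than the paper's ``the blocks are disjoint, so the bounds add,'' which tacitly needs the per-block read sets to be disjoint.
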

\begin{proof}
Fix the decoder: the version $\resolve(h)$ returns after the fold is a
deterministic function of the bits in $S$ and the (workload-independent) query
address; any payload summary the structure keeps is irrelevant to the version it
reports. Run the workload for an edit pattern $x\in\{0,\dots,t-1\}^{B}$, fold the
block, and read off the $B$ version answers; by correctness they reveal the current
version of every handle, hence $x$. So $x\mapsto S(x)$ is injective on its $t^{B}$
inputs, its range has at least $t^{B}$ values, and $|S|\ge\log_2 t^{B}=B\log_2 t$.
The blocks are disjoint, so the bounds add, giving $\Omega(mB\log t)=\Omega(n\log t)$.
\end{proof}

In the \cascade{} this irreducible information lives in the digests: a digest
records each handle's version in $\bigoh(\log t)$ bits, so the digests occupy
$\bigoh(n\log t)$ bits, meeting \Cref{thm:lower}. The interval map adds only
$\bigoh(1)$ words per live handle and per summarised run, which is $o(n)$ when
folding dominates (\Cref{thm:complexity})---so the bound is met by the cold
digests, not by a per-handle index. We claim no more: \Cref{thm:lower} bounds the
bits any structure must keep recoverable, and the digests realise it up to the
encoding of a version; a succinct or compressed version map could lower the
constant at a time cost, which we do not rule out. What the bound \emph{does}
exclude is the family that keeps $o(\log t)$ bits per folded handle---in
particular the fold that discards versions, which restarts a re-edited handle at
version one (a stale read) or loses it (a dangling read), the collapse measured in
\Cref{sec:empirical}.

The version bound says nothing about \emph{query} cost or about the order the
\cascade{} maintains; that cost has its own, separate lower bound, and it is where
the \cascade{} genuinely parts from a hash.

\begin{proposition}[Ordered range is not free]
\label{prop:order}
A hash index over $n$ handles---expected $\bigoh(1)$ point resolution, no order---has
no range-reporting algorithm faster than testing every stored handle, so it answers
a range query in $\Theta(n)$ time independent of the output size $k$
(\Cref{sec:empirical} confirms this for \textsf{HashSpine}). Sub-linear range and
predecessor search require an ordered layout and even then cannot be done in
$\bigoh(1)$ cell probes per query within the space the \cascade{}
uses~\citep{BeameFich2002,PatrascuThorup2006}. The \cascade{} reports a range of
$k$ handles in $\bigoh(\log n + k)$ (\Cref{sec:range}) and resolves a point in
$\bigoh(\log n)$; the logarithmic factor over a hash is the price of the ordered,
snapshot-consistent access a hash cannot provide at all.
\end{proposition}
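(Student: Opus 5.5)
The proposition has three parts, and I would prove them in order. The first is the $\Theta(n)$ range bound for a hash index. The second is the cell-probe remark, which I would simply cite. The third is the matching upper bound for the \cascade{}.

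\textbf{Hash index.} I would fix the model: the index reveals a stored handle only when probed with its exact key or scanned, and its layout carries no order information. For any range $[\ell,r]$ I would use an adversary argument. Suppose the algorithm returns an answer without inspecting some stored slot. The adversary can then relocate a handle into or out of $[\ell,r]$ at that slot, because the hash function places keys pseudo-randomly and independently of their order. This changes the correct answer without changing anything the algorithm saw.

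Direct key probes do not escape this. Probing every integer in $[\ell,r]$ costs $r-\ell+1$, which is unbounded relative to $k$. Since handles are dense counters, this may look cheap. The adversary therefore takes handle sets that are sparse, with superseded or reclaimed keys, so that a probe-per-key strategy costs $\Omega(n)$ while $k=O(1)$.

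So any correct algorithm inspects $\Omega(n)$ slots. Scanning all slots gives $O(n)$, hence $\Theta(n)$ independent of $k$. I expect the main obstacle to be stating precisely what ``no order'' means so that dense-key probing is excluded. I would formalise a hash index as one whose only query primitives are exact-key lookup and full enumeration.

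\textbf{Predecessor lower bound.} For the claim that even an ordered layout cannot achieve $O(1)$ probes, I would reduce range emptiness to predecessor search. Then I would invoke \citet{BeameFich2002} and \citet{PatrascuThorup2006} with space $O(n)$ words, which is the space the \cascade{} uses. Their bounds give $\omega(1)$ probes, so nothing further needs proving here.

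\textbf{Upper bound for the \cascade{}.} Point resolution is a single \textsc{stab} on a balanced persistent tree of height $O(\log n)$, followed by an $O(1)$ digest lookup. For a range query, I would first stab the left endpoint. I would then walk the tree in order to the right endpoint. The walk visits $O(\log n)$ boundary nodes plus the nodes whose intervals meet the range.

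Each visited interval node contributes at least one reported handle, since intervals are nonempty and cover every handle in the range by \Cref{thm:total}. Expanding a summarised interval costs time linear in the handles it reports. The total is therefore $O(\log n + k)$, as in \Cref{sec:range}.

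Combining the three parts gives the stated separation.
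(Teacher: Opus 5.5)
Your outline follows the paper, which gives no separate proof. The proposition rests on its inline assertion, the \textsf{HashSpine} timing, the two citations, and the traversal of \Cref{sec:range}. Your upper-bound part is exactly that traversal and is fine. Both lower-bound parts fail, however.

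For the hash bound, the sparse handle sets your adversary needs do not exist in this model. Handles are minted $1,2,\dots,n$ by a counter, and there is no delete. A supersede gives the \emph{same} handle a new version, and the proof of \Cref{thm:total} records that no operation removes a handle; you invoke \Cref{thm:total} yourself in your third part. So at every epoch the key set is the prefix $[1,n]$, and the answer to $[\ell,r]$ is known in advance to be $[\ell,r]\cap[1,n]$. An index with exact-key lookup, which is your own formalisation, therefore reports the range with $k$ expected-$\bigoh(1)$ probes. Your unread-slot adversary collapses for the same reason: the algorithm can compute the slot of every in-range key, so no unread slot can hide one. The $\Theta(n)$ claim is therefore not provable in the paper's model. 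It needs a sparse key universe (arbitrary keys, or deletions). Even then it is only a worst case over query ranges, roughly $\Theta(\min(n,\,r-\ell+1))$, not a bound independent of $k$ for every query. The \textsf{HashSpine} measurement shows an implementation that scans, not a lower bound.

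For the cell-probe part, ``reduce range emptiness to predecessor search'' points the wrong way. It shows range emptiness is no harder than predecessor, so predecessor lower bounds do not transfer to it; you would need to reduce predecessor to the range problem. No such reduction exists in general. Static one-dimensional range emptiness and reporting over integer keys admit linear-space structures with $\bigoh(1)$ and $\bigoh(1+k)$ query time (Alstrup, Brodal and Rauhe, 2001). Over counter-minted handles, range emptiness is decided simply by comparing the endpoints with $1$ and $n$. The results of Beame--Fich and P\u{a}tra\c{s}cu--Thorup support only the predecessor half of that sentence, not a lower bound for range search.
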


\begin{theorem}[Total work and space]
\label{thm:complexity}
Run any sequence of operations on an initially empty \cascade{} with hot capacity
$C$ and fold block $B$, comprising $a$ appends and $s$ supersedes and the folds
they trigger; let $n=a$. Then:
\begin{enumerate}[label=\textup{(\alph*)},leftmargin=2.2em,itemsep=1pt]
\item \emph{Per operation.} Append, supersede, and resolve each cost
  $\bigoh(\log n)$ expected and allocate $\bigoh(\log n)$ index nodes; range
  reporting costs $\bigoh(\log n + k)$. The bounds are expected over the treap's
  hashed priorities and become worst-case if the \VII{} is realised with a
  deterministic persistent balanced tree (\Cref{sec:vii}); all bounds below are
  stated this way.
\item \emph{Compaction.} A fold splices one interval node per maximal contiguous
  run of demoted handles, $\bigoh(\log n)$ each, \emph{independent of the run
  length}. Compacting the appended records uses $\bigoh(a/B)$ splices---amortised
  $\bigoh(1)$ index work per appended record once $B=\Omega(\log n)$---while each
  supersede fragments at most one run and triggers at most one further splice,
  $\bigoh(\log n)$ charged to that supersede. Payload compaction is $\bigoh(1)$
  amortised per record.
\item \emph{Working set.} At most $C+B$ records are materialised at any instant,
  so the working set is $\bigoh((C+B)c_{\max})$, independent of $n$.
\item \emph{Index.} The map holds $\bigoh(L+\Phi)$ nodes, $L\le C+B$ live and
  $\Phi$ summarised runs with $\Phi\le 1+s$; an append-dominated history has
  $\Phi=\bigoh(a/B)$, an $o(n)$ index.
\item \emph{Cold store and reclamation.} Digests hold $\bigoh(n\log t)$ bits
  (\Cref{thm:lower}); an entry is reclaimed when its handle is re-materialised and
  a digest when its last entry is, by reference counting in $\bigoh(1)$ each, so
  in ephemeral mode the live digest content is the number of currently summarised
  handles, $\le n$.
\item \emph{Retained snapshots.} Retained roots share every subtree they have in
  common, so the overhead is the number of \emph{distinct} nodes path-copied by
  updates within the kept epochs' reachability---$\bigoh(U_{\mathrm{ret}}\log n)$
  words for $U_{\mathrm{ret}}$ such updates---not $R$ copies of the tree. This
  sharing is why retaining $R$ tokens raises the index to a small constant per
  handle rather than by a factor of $R$ (\Cref{fig:snapshot} measures it).
\end{enumerate}
The total time over the sequence is $\bigoh((a+s)\log n)$.
\end{theorem}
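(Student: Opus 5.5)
We prove the theorem part by part, using one potential argument over the node count of the \VII{}. The tree has one node per live singleton plus one per summarised run. Every bound follows from two facts. First, each \VII{} primitive is a constant number of split/merge operations on a balanced persistent tree. Second, the number of runs grows only when a fold creates one or a supersede splits one.

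For (a), we write each primitive in terms of split and join. $\textsc{stab}$ is a root-to-leaf search. $\textsc{append}$ is a join with a new maximal singleton. $\textsc{materialise}(h)$ stabs the run $[a,b]$ containing $h$, deletes it, and reinserts up to three nodes: $[a,h-1]$, the singleton $[h,h]$, and $[h+1,b]$, each with the same digest pointer. $\textsc{summarise}(a,b,d)$ splits the tree at $a$ and at $b+1$, discards the middle subtree, and joins the two ends around one new node. Treap split and join take expected time proportional to the height, which is $\bigoh(\log N)$ for $N$ nodes (\citet{SeidelAragon1996treaps}). Path copying allocates one fresh node per node on the touched paths, so allocation is also $\bigoh(\log N)$. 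Since $N\le n$, this gives $\bigoh(\log n)$. Discarding the middle subtree in $\textsc{summarise}$ is $\bigoh(1)$ work: it is a pointer drop, and reclaiming it by reference counting is charged to the earlier insertions of those nodes. Range reporting is split at both endpoints, then an in-order walk over the $m\le k$ nodes in the range, expanding each summarised node into its handles in its digest. Each node reached yields at least one reported handle, so the cost is $\bigoh(\log n+k)$.

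For (b)--(d) we count runs. Consider the survivors of a demoted block. They form at most one maximal contiguous run, plus one extra run for each handle removed by a supersede since demotion. A fold therefore performs at most $1+(\text{number of such supersedes})$ splices. Each supersede is charged at most once: it re-materialises one handle, which removes that handle from at most one block or splits at most one run. Summed over the whole sequence, folds make $\bigoh(a/B+s)$ splices, each $\bigoh(\log n)$. Dividing the $a/B$ term by $a$ gives amortised $\bigoh((\log n)/B)=\bigoh(1)$ per appended record when $B=\Omega(\log n)$. For (c), the argument is direct from the fold rule. Demotion begins once more than $C$ records are materialised, and a fold fires as soon as $B$ demoted records accumulate. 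So at most $C+B$ records hold payloads at any time, and each is charged at most $c_{\max}$.

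For (d), bound $\Phi$ with the potential $\Phi$ itself. Appended handles are demoted in handle order, so an append-only history produces adjacent runs that could coalesce. Without coalescing, one run per block gives $\Phi=\bigoh(a/B)$. With coalescing, folding a run adjacent to an existing summarised run merges into it, and then $\Phi$ stays fixed; the only increment is a $\textsc{materialise}$ that splits a run in two, which raises $\Phi$ by at most one. This is the step I expect to be the main obstacle. The paper's $\Phi\le 1+s$ requires adjacent summarised runs to be merged even when they point at different digests, or else "run" must mean a maximal run of summarised handles regardless of digest. I will adopt that reading: a summarised node may reference a list of digests, or the splice joins neighbours. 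I will also state the append-dominated case separately as $\Phi=\bigoh(a/B)$ under per-digest runs.

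For (e), we charge storage to entries. Each digest entry stores a version in $\bigoh(\log t)$ bits, and each handle is folded at most once per materialisation; reclaimed entries are not counted. The cold-store total is $\bigoh(n\log t)$, matching \Cref{thm:lower}. A digest's reference count is decremented once per re-materialisation of one of its handles, in $\bigoh(1)$. For (f), every retained root is the current root after some update in the retained range. Every node reachable from a retained root was either shared or allocated by one of the $U_{\mathrm{ret}}$ updates. By (a) each update allocates $\bigoh(\log n)$ nodes, which gives the stated bound. Summing (a) over the $a+s$ operations, and adding the splices from (b), which are dominated by $\bigoh((a+s)\log n)$, gives the total time bound.
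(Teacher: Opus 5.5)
Your treatment of (a), (c), (e), (f) and the closing summation matches the paper's own argument:
\begin{itemize}
\item each primitive is $\bigoh(1)$ splits and merges, copying $\bigoh(1)$ root-to-leaf paths;
\item the materialised count stays in $[C,C+B]$;
\item entries and digests are freed by reference counting;
\item retained roots share everything except the paths copied by the kept updates.
\end{itemize}

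Your splice count in (b) also reaches the paper's $\bigoh(a/B+s)$, but the accounting needs two repairs, neither of which changes the asymptotics. First, the survivors of a block are not just ``one run plus the dropped handles''. A later block also receives re-materialised old handles, which re-demote as isolated runs. So one supersede can cost two extra splices (the drop or split now, the isolated re-fold later), and there are $\bigoh((a+s)/B)$ blocks rather than $a/B$. Second, you never charge the $\bigoh(B')$ digest construction at $\bigoh(1)$ per member, which is what the payload-compaction clause needs.

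The real issue is (d), and your diagnosis is right: for nodes that name a single digest, $\Phi\le 1+s$ is false. \Cref{alg:fold} installs every folded run as its own node over a fresh digest and never joins it to a summarised neighbour. So an append-only history ($s=0$) has $\Phi\approx a/B$. This is the $1/B$ index per handle the paper itself measures (\Cref{fig:index}), and it matches the $A=\bigoh(1+a/B+s)$ of \Cref{thm:frag}(a). The paper's one-line proof of (d), ``each supersede creates at most one new run boundary'', has exactly this hole: every fresh fold also creates a boundary against the previous digest.

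Your rescue, though, proves a statement about a different structure. A node carrying a list of digests (or a splice that joins differently-digested neighbours) only moves the fragmentation inside the node. The lists have total length equal to the per-digest run count, so ``$\bigoh(L+\Phi)$ nodes'' no longer bounds the index size. Under that reading you even get $\Phi\le L+1\le C+B+1$ trivially, independent of $s$, because distinct summarised runs are separated by live handles. That shows the cost has been hidden, not removed.

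Even under your reading, the step ``the only increment is a \textsc{materialise}'' fails. A fold creates a new isolated run whenever both neighbours are non-summarised. Examples are a block from which a re-edited handle was dropped, or a re-demoted handle whose neighbours are still live.

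The defensible form of (d) for the paper's structure is $\Phi=\bigoh(1+a/B+s)$, and your own (b) already gives it:
\begin{itemize}
\item each fold splice installs one summarised node;
\item each \textsc{materialise} splits off at most one more;
\item nothing ever merges.
\end{itemize}
This bound still yields the $o(n)$ index for append-dominated histories, while dropping the false $\Phi\le 1+s$.
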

\begin{proof}
(a) $\textsc{append}$, $\textsc{materialise}$ (a split and a merge) and
$\textsc{stab}$ each follow one root-to-leaf path of the treap, of expected length
$\bigoh(\log n)$, copying only that path; range reporting visits $\bigoh(\log n)$
boundary nodes and $\bigoh(k)$ interior ones.
(b) $\textsc{summarise}(a,b,d)$ is two splits and two merges, $\bigoh(\log n)$
regardless of $b-a+1$, replacing the run's nodes by one. A handle never
re-materialised since its append is, when demoted, part of the still-untouched
tail of the append order and hence contiguous with its neighbours, so compacting
the $a$ appends uses $\bigoh(a/B)$ runs of one splice each. A supersede makes its
handle a live singleton, splitting at most one summarised run (one extra node, and
one extra splice when that fragment later folds); both are $\bigoh(\log n)$ charged
to the supersede. A digest over a run of length $B'$ is built in $\bigoh(B')$ and
charged $\bigoh(1)$ to each of its members~\citep{Tarjan1985amortized}.
(c) Demotion fires whenever more than $C$ records are active and a block of $B$
folds together, so the materialised count stays in $[\,C,\,C+B\,]$.
(d) Immediate from the interval representation; each supersede creates at most one
new run boundary, so $\Phi\le 1+s$.
(e) A re-materialisation deletes the handle's digest entry; versions form a
forest, so reference counting frees entries and digests in $\bigoh(1)$ each.
(f) A path-copied node is freed once unreachable from the live root and every
retained token. Summing (a) over the $a+s$ operations and adding (b) gives total
time $\bigoh((a+s)\log n)$.
\end{proof}

\begin{remark}[Explicit potential for the amortised fold]
\label{rem:potential}
The charge in \Cref{thm:complexity}(b) is witnessed by the potential
$\Phi=\kappa\,(\log n)\,p$, where $p$ is the number of demoted handles not yet
folded and $\kappa$ exceeds the per-splice constant. Each append or supersede demotes
at most one handle, raising $\Phi$ by $\kappa\log n$, folded into its
$\bigoh(\log n)$ amortised cost; folding a contiguous run of $\ell\ge 1$ handles
does one $\bigoh(\log n)$ splice (a $\textsc{split}$, $\textsc{split}$, $\textsc{merge}$,
$\textsc{merge}$ that copies two $\bigoh(\log n)$ boundary paths and discards the
interior subtree of the run) while releasing $\kappa\ell\log n\ge\kappa\log n$
potential, so its amortised cost is non-positive. A supersede that fragments a
summarised run pays the split's $\bigoh(\log n)$ directly; the fragment's eventual
re-fold is covered by the same potential when its handle re-demotes. Digest
construction and the copied digest pointer are $\bigoh(1)$ per folded handle, and
retained tokens only delay node reclamation, not the charge. The sum is
$\bigoh((a+s)\log n)$.
\end{remark}

The bounds so far are stated in $n$. The sharper and, we think, central statement
is in terms of how \emph{fragmented} the folded history is, because that is the
quantity an adversary controls and the quantity against which the structure is
optimal.

\begin{definition}[Fragmentation]
\label{def:frag}
At an instant let $\ell$ be the number of live (materialised) handles and $\rho$ the
number of maximal runs of consecutive handles that map to one common digest. The
\emph{fragmentation} is $A=\ell+\rho$: exactly the number of nodes in the \cascade{}'s
interval map, since each live handle is one singleton node and each maximal
same-digest run is one node.
\end{definition}

\begin{theorem}[Fragmentation trade-off]
\label{thm:frag}
Write $A$ for the fragmentation (\Cref{def:frag}).
\begin{enumerate}[label=\textup{(\alph*)},leftmargin=2.2em,itemsep=1pt]
\item \emph{Upper bounds.} The \cascade{} stores $\Theta(A)$ index nodes; it resolves
  a point in $\bigoh(\log A)$ and reports a range of $k$ handles in
  $\bigoh(\log A + k)$ (expected; worst case with a deterministic balanced tree); and
  over $a$ appends and $s$ supersedes the total update work is
  $\bigoh\big((a/B + s)\log A\big)$, with $A=\bigoh(1+a/B+s)$ throughout.
\item \emph{Space lower bound.} Any structure that resolves every handle and reports
  ordered handle ranges must store $\Omega(A)$ words of index on a state of
  fragmentation $A\le n^{1-\varepsilon}$, even if it summarises payloads arbitrarily.
\item \emph{Query lower bound.} Reporting the $k$ handles of a range requires
  $\Omega(\log A + k)$ (the $\log A$ by predecessor search among the $A$ runs,
  Proposition~\ref{prop:order}).
\item \emph{Adversary.} For every block size $B$ and every $s\le n/2$, an oblivious
  sequence of $s$ supersedes against any structure with $o(n)$ materialised payloads
  drives the fragmentation to $A=\Theta(s)$; the maximum $A=\Theta(n)$ is reached at
  $s=\Theta(n)$.
\end{enumerate}
\end{theorem}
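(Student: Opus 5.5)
The plan is to prove the four parts separately, each reducing to an earlier result or a counting argument.

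\emph{Part (a).} By \Cref{def:frag}, $A$ is exactly the node count of the \VII{}, so the $\Theta(A)$ space bound is immediate. Every operation in \Cref{thm:complexity}(a) follows one root-to-leaf path of a balanced (or treap-balanced) tree on $A$ nodes. I would therefore rerun that proof with $n$ replaced by the current node count, which gives $\bigoh(\log A)$ for stab and $\bigoh(\log A+k)$ for range reporting. One detail in range reporting needs care: each summarised node met on the traversal expands into its handles through the digest, so the interior cost is $\bigoh(k)$ plus the $\bigoh(\log A)$ boundary nodes. For total update work, \Cref{thm:complexity}(b) already counts splices: $\bigoh(a/B)$ for append folds and $\bigoh(1)$ per supersede. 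Each splice costs $\bigoh(\log A)$ rather than $\bigoh(\log n)$, which yields $\bigoh((a/B+s)\log A)$. The per-append singleton insertions are the one charge that does not fit this pattern. I would account for them in the $a/B$ term by batching the inserts of a block, or else accept an additive $\bigoh(a)$ that is dominated under the paper's convention $B=\Omega(\log n)$. The bound $A=\bigoh(1+a/B+s)$ comes from \Cref{thm:complexity}(c,d): $\ell\le C+B$ with $C,B$ treated as constants, and $\rho=\Phi\le 1+s$. I would sharpen $\Phi$ to $\bigoh(a/B+s)$ by counting one run per folded block plus one extra boundary per supersede.

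\emph{Part (b).} I plan an encoding argument. Fix $A$ and build a family of states that share the same digest contents but place the $A$ run boundaries at different positions in $[1,n]$. A structure that resolves every handle to its correct digest, or that reports ordered ranges, must distinguish these states: querying each handle recovers the boundary set. There are $\binom{n}{A}$ such sets, so the structure needs $\log_2\binom{n}{A}\ge A\log(n/A)$ bits. When $A\le n^{1-\varepsilon}$ this is $\Omega(A\log n)$ bits, which is $\Omega(A)$ words of $\Theta(\log n)$ bits. The step that needs care, and the \textbf{main obstacle}, is ruling out that the boundary information hides inside the digests, which the statement lets the structure summarise arbitrarily. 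I would handle this by making the digests position-oblivious, for example by using identical summaries. Then the map from handle to digest is the only carrier of the boundaries, and it must lie in the index.

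\emph{Part (c).} The $\Omega(k)$ term is output size. For the $\Omega(\log A)$ term I would reduce predecessor search among the $A$ run endpoints to a range query, as \Cref{prop:order} does. In the comparison model, a range query must at least locate its endpoint among $A$ distinguishable runs, and a decision-tree argument with $A+1$ outcomes gives $\log_2(A+1)$.

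\emph{Part (d).} The adversary supersedes handles spaced at least two apart inside already-folded history, for instance every other handle of distinct old blocks. Each such supersede re-materialises a handle and splits one summarised run. Because any structure holds only $o(n)$ materialised payloads, most of these handles are eventually demoted and folded again into new digests. This leaves $\Theta(s)$ maximal same-digest runs that alternate with the surrounding old-digest runs, so $A=\Theta(s)$. Taking $s=n/2$ gives $A=\Theta(n)$. Finally, since $A$ is defined through the digest assignment, it is structure-independent, and the lower bound of (b) applies to every competitor.
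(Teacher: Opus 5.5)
Your parts (a), (b) and (d) follow the paper's proof. You use $A$ map nodes with the splice count of \Cref{rem:potential}, reading $\log A$ for $\log n$. You encode the boundary placements injectively; the paper counts the $\binom{n-1}{A-1}$ placements of $A-1$ cuts in $n-1$ gaps, and your $\binom{n}{A}$ is a harmless overcount. And you use evenly spaced supersedes against one folded run.

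Part (c) takes a genuinely different route. The paper appeals to the cell-probe predecessor bounds behind Proposition~\ref{prop:order}. Those bounds are model-general but cannot deliver $\log A$: over the universe $[1,n]$ a y-fast trie finds a predecessor among $A$ keys in $O(\log\log n)$ time with $O(A)$ words, so they give only a super-constant term. Your comparison-model decision tree does give the full $\log_2 A$, at the price of covering only comparison-based structures. State that restriction. Also note that the $A$ outcomes differ only because the report carries each handle's digest or version; by totality, the bare list $x,\dots,y$ needs no search.

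Three steps need repair.

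In (a) you rightly see that the $a$ singleton insertions escape $(a/B+s)\log A$, but neither patch works. Batching still allocates one node per appended handle, hence $\Omega(a)$ work, and under path copying each insertion copies a $\Theta(\log A)$ path. And when $B=\Omega(\log n)$ we have $(a/B)\log A=O(a)$, so an additive $O(a)$ dominates the bound rather than being dominated by it. For example, with $s=0$ and $B=\sqrt n$ the claimed total $O(\sqrt n\log n)$ is below $a=n$. The bound is provable only as compaction (splice) work, which is all the paper's proof counts. The appends themselves pay $O(\log A)$ each, matching the $O((a+s)\log n)$ total of \Cref{thm:complexity}.

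In (b), identical summaries keep the digests from carrying the boundaries only if the query answer names the digest, as your phrase ``resolves every handle to its correct digest'' presumes. Under the paper's output $(h,\mathrm{ver},\sigma(\mathrm{pay}))$, or a plain ordered range report, every answer is then the same across placements. A correct structure then need store no boundaries, and the count yields nothing. Make that output model explicit.

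In (d), $A$ is not structure-independent. It reaches $\Theta(s)$ only because a re-demoted handle is folded into a \emph{fresh} digest. A competitor that writes the handle back into its old digest keeps $\rho=1$, so its $A$ never exceeds the working-set size plus one. The paper's proof rests on the same fresh-digest policy, so state it as a hypothesis instead of concluding that (b) binds every competitor.
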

\begin{proof}
(a) The interval map has one node per live handle and one per same-digest run, so $A$
nodes and height $\bigoh(\log A)$; $\textsc{stab}$, $\textsc{append}$,
$\textsc{materialise}$, and $\textsc{summarise}$ each walk one root-to-leaf path, and
range reporting visits $\bigoh(\log A)$ boundary nodes plus the $\bigoh(k)$ reported
handles, expanding summarised runs in place. The update total is \Cref{rem:potential}
with $\log n$ read as the live height $\log A$: the $a$ appends fold in $\bigoh(a/B)$
splices and the $s$ supersedes in $\bigoh(s)$ splices, each $\bigoh(\log A)$. A
supersede raises $A$ by at most two and a fold lowers it, so $A=\bigoh(1+a/B+s)$.

(b) Consider states in which the $n$ handles are cut into $A$ maximal runs with
adjacent runs in distinct digests, the $A-1$ boundaries placed at any $A-1$ of the
$n-1$ gaps. There are $\binom{n-1}{A-1}\ge (n/A)^{A-1}$ placements, so the family has
$\Omega(A\log(n/A))$ states. A structure meeting the premises distinguishes them: a
range query over $[1,i]$ together with the resolutions of its endpoints reveals the
digest on each side of gap $i$, so scanning the gaps recovers every boundary and hence
the state; the payload summaries play no part. Its memory is therefore an injective
code for the family and occupies $\Omega(A\log(n/A))=\Omega(A)$ words of
$\Theta(\log n)$ bits for $A\le n^{1-\varepsilon}$. The $\Omega(A)$ index is met by
the \cascade{}'s $A$ nodes.

(c) To output $k$ handles costs $\Omega(k)$; locating the range endpoint among the
$A$ runs is predecessor search, not $\bigoh(1)$ in the cell-probe model within this
space (Proposition~\ref{prop:order}), giving the additive $\Omega(\log A)$.

(d) Append $n$ handles and fold them into one run ($A=1$). Now supersede the handles
at positions $\lfloor in/s\rfloor$, $i=0,\dots,s-1$, for $s\le n/2$ so consecutive
edits are at least two handles apart. Each supersede materialises a handle interior to
a summarised run, splitting it; with only $o(n)$ payloads materialisable the handle is
demoted again before the sweep returns to its neighbourhood and is re-folded into a
\emph{fresh} digest, so the split it made survives as a distinct-digest boundary.
Because the edits are spaced, the boundaries do not merge, leaving $\Theta(s)$ of them,
so $A=\Theta(s)$; the largest $A=\Theta(n)$ occurs at $s=\Theta(n)$, while editing
\emph{every} handle re-coalesces the run and lowers $A$ again---graceful, and visible
in \Cref{fig:frag}.
\end{proof}

\begin{corollary}[Instance optimality of the index]
\label{cor:instance}
On every operation sequence the \cascade{}'s index is within a constant factor of the
smallest any reference-stable, ordered, bounded-working-set structure can use: it is
$\Theta(A)$, and $\Omega(A)$ is forced (\Cref{thm:frag}b). The index is therefore
$o(n)$ exactly when the history is append-dominated ($A=\bigoh(a/B)$) and grows to
$\Theta(s)$ only when the edits are adversarially fragmenting---and then so must any
competitor's. The empirical index/$n$ curve of \Cref{fig:sweep} and the adversarial
\Cref{fig:frag} trace $A$ directly.
\end{corollary}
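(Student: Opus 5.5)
The plan is to derive the corollary by sandwiching, instance by instance, with the two halves of \Cref{thm:frag}. Fix an operation sequence and an instant, and let $A=\ell+\rho$ be the fragmentation of the state it produces (\Cref{def:frag}). The upper half comes from \Cref{def:frag} and \Cref{thm:frag}(a): the \cascade{}'s interval map has exactly one node per live handle and one per maximal same-digest run. So it holds exactly $A$ nodes of $\bigoh(1)$ words each. Hence its index is $\Theta(A)$, and this matches any competitor that also uses at least a constant number of words per piece.

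For the matching lower half I will invoke \Cref{thm:frag}(b). Any structure that resolves every handle and reports ordered ranges must store $\Omega(A)$ words on such a state, whatever payload summarisation it uses.

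This is where the real care lies, and it is the step I expect to be the main obstacle. Part (b) is proved for a family of states parametrised by boundary placements, in the regime $A\le n^{1-\varepsilon}$. It is not proved for the single state the given sequence produces. So the ``on every operation sequence'' claim must be read as a worst-case-over-the-class statement. Among all histories whose fragmentation is $A$, no structure can use $o(A)$ words on all of them. I would try to justify this first by observing two things: the boundary placements of (b) are all reachable by some operation sequence, namely a fold followed by spaced supersedes as in (d), and the structure cannot know which placement it faces. A pointwise lower bound for one fixed instance is impossible by the usual ``hardwire the answer'' argument, so the corollary should be stated in this class-wise sense.

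For the regime $A>n^{1-\varepsilon}$, I would note that $A=\Theta(n)$ there up to the exponent. In that case the trivial bound of one word per handle-order piece is recovered by a counting argument with $\log\binom{n-1}{A-1}\ge A\log(n/A)$ bits. This still gives $\Omega(A)$ bits, though only $\Omega(A/\log n)$ words, and I would flag that gap honestly.

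The two regime statements then follow by substituting the bound $A=\bigoh(1+a/B+s)$ from \Cref{thm:frag}(a). When appends dominate, meaning $s=o(n)$ and $B=\omega(1)$, this gives $A=\bigoh(a/B)+o(n)=o(n)$. For the converse, \Cref{thm:frag}(d) supplies an oblivious sequence of $s$ supersedes forcing $A=\Theta(s)$ against any bounded-working-set structure. Applying (b) to that state shows every competitor needs $\Omega(s)$ words, while the \cascade{} uses $\Theta(s)$. This also yields the ``exactly when'' clause: $A=o(n)$ forces $\ell+\rho=o(n)$, which by (d) is incompatible with $s=\Theta(n)$ spaced edits.
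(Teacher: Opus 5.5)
Your sandwich is how the paper itself gets the corollary. The paper gives no separate proof: it reads $\Theta(A)$ off \Cref{thm:frag}(a) and $\Omega(A)$ off (b), and the two regimes off $A=\bigoh(1+a/B+s)$ and the adversary (d). Both of your caveats are correct, and this reading skips them.
\begin{itemize}
\item Part (b) is a worst case over a family of boundary placements. So ``on every operation sequence'' can only mean optimality as a function of $A$, since a single instance can be hard-wired.
\item Part (b) covers only $A\le n^{1-\varepsilon}$.
\end{itemize}
The second gap is real on the very instances the corollary advertises. In the adversary of (d) each handle is edited at most once. Take a trivial summary map. A structure holding an $n$-bit handle-indexed version array plus the $C+B$ hot payloads then resolves in $\bigoh(1)$ and reports ordered ranges in $\bigoh(1+k)$, using $\bigoh(n/\log n+C+B)$ words. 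Yet $A=\Theta(n)$ at $s=\Theta(n)$. So ``$\Omega(A)$ is forced'' and ``so must any competitor's'' both need the restriction $A\le n^{1-\varepsilon}$.

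Two smaller corrections. Your $\Omega(A)$-bit count needs $A\le cn$ with $c<1$, since $\binom{n-1}{A-1}=1$ at $A=n$. And ``exactly $A$ nodes'' describes the live root in ephemeral mode only; retained tokens add $\bigoh(U_{\mathrm{ret}}\log n)$ words (\Cref{thm:complexity}(f)).

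The step that genuinely fails is your derivation of the ``exactly when'' clause. Part (d) is an existence statement: one family of spaced edit patterns drives $A$ to $\Theta(s)$. Showing that $A=o(n)$ is incompatible with that family does not give the universal converse ``$A=o(n)$ implies append-dominated''.

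If append-dominated means $s=o(n)$, the converse is false. Interleave $a$ appends with $s=a$ supersedes, each of the newest handle while it is still live. No summarised run is ever split and demotion stays in handle order. So $A\le C+B+\bigoh(1+a/B)=o(n)$ whenever $B=\omega(1)$ and $C+B=o(n)$, although $s=\Theta(n)$.

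If append-dominated instead means the parenthetical $A=\bigoh(a/B)$, the forward direction is immediate but the converse is still false. Take $s=n/\log n$ spaced edits with $B=\sqrt{n}$: this gives $A=\Theta(n/\log n)=o(n)$ while $A\gg a/B$.

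Your closing step applies (b) to the adversary's state to get $\Omega(s)$ words for every competitor. At $s=\Theta(n)$ this contradicts your own caveats twice over. It is one instance, not a family. And it lies outside $A\le n^{1-\varepsilon}$, exactly where the array example above defeats it.

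What your argument actually supports is three statements:
\begin{itemize}
\item $s=o(n)$, $B=\omega(1)$ and $C+B=o(n)$ together imply $A=o(n)$;
\item fragmenting sequences with $A=\Theta(s)$ exist;
\item granting (b), for $s\le n^{1-\varepsilon}$ no structure beats $\Omega(s)$ words in the worst case over such sequences.
\end{itemize}
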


\begin{theorem}[External-memory I/O complexity]
\label{thm:io}
Realise the interval map as a persistent $(a,b)$-tree with $a=\lceil\beta/2\rceil$,
$b=\beta$, one page per node holding up to $\beta$ interval entries, and store each
digest as the $\lceil B'/\beta\rceil$ consecutive pages holding the
$(\text{version},\sigma(\text{payload}))$ records of its run of length $B'$. In the
I/O model with page size $\beta$~\citep{AggarwalVitter1988}:
\begin{enumerate}[label=\textup{(\roman*)},leftmargin=2em,itemsep=1pt]
\item $\resolve$ of a live handle costs $\bigoh(\log_\beta A)$ I/Os; of a summarised
  handle, $\bigoh(\log_\beta A)$ to stab its interval node plus one digest-page read.
\item reporting a $k$-handle range costs $\bigoh(\log_\beta A + k/\beta)$ I/Os: the
  $\bigoh(\log_\beta A)$ boundary pages, then for each summarised run it meets the
  $\lceil B'/\beta\rceil$ digest pages holding that run's handles, which sum to
  $\bigoh(k/\beta)$.
\item a supersede ($\textsc{materialise}$) and a fold ($\textsc{summarise}$) are
  $\bigoh(1)$ $(a,b)$-tree split/merge restructurings, each copying the
  $\bigoh(\log_\beta A)$ pages on one root-to-leaf path for persistence---so
  $\bigoh(\log_\beta A)$ page writes, \emph{independent of the run length $B'$}, where
  a per-handle scheme writes $\Theta\!\big(\tfrac{B'}{\beta}\log_\beta n\big)$ pages.
\item the $C+B$ live payloads occupy $\bigoh\big((C+B)c_{\max}/\beta\big)$ hot pages
  and the digests the cold pages; a retained snapshot freezes the $\bigoh(\log_\beta A)$
  pages on the copied path.
\end{enumerate}
\end{theorem}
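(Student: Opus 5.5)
The plan is to prove \Cref{thm:io} by carrying the in-memory analysis of \Cref{thm:frag}(a) into the I/O model. The first step is to count the height of the tree. The map holds exactly $A$ interval entries (\Cref{def:frag}). In an $(a,b)$-tree with $a=\lceil\beta/2\rceil$, every non-root node has fan-out at least $\beta/2$. The height is therefore $\bigoh(\log_{\beta/2}A)=\bigoh(\log_\beta A)$, taking $\beta\ge 4$. Each node occupies one page, so any root-to-leaf walk costs $\bigoh(\log_\beta A)$ I/Os.

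\textbf{Resolve and range reporting.} Part (i) follows from this directly. $\textsc{stab}$ descends one path, using the separator keys (the interval lower endpoints) in each page. A live singleton carries its version and payload pointer in the leaf. A summarised node names its digest. Because the digest stores its run's records contiguously, the record for $h$ sits at offset $h-a$ in the run $[a,b]$, which identifies a single page. This gives one extra I/O.

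For (ii), I would first locate the left endpoint in $\bigoh(\log_\beta A)$ I/Os and then sweep leaves in key order up to the right endpoint. Without leaf links, the sweep is an in-order traversal between the two boundary paths, which touches $\bigoh(\log_\beta A)$ boundary pages plus interior pages. The step I expect to need care is bounding the interior pages: they must be $\bigoh(k/\beta)$, not $\bigoh(k)$. The obstacle is that a leaf entry for a summarised run reports $B'\ge 1$ handles, so the entries touched number at most $k$. Full interior leaves hold at least $\beta/2$ entries each, which gives at most $2k/\beta+\bigoh(1)$ leaf pages, and internal pages sum geometrically to the same order. For each summarised run met, the handles reported from it occupy at most $\lceil k_r/\beta\rceil+1$ consecutive digest pages, where $k_r$ is the number reported from that run. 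The additive $+1$ summed over runs could reach the number of runs met. But each such run lies in a leaf entry already counted, and interior runs are met at most $\bigoh(k/\beta)$ times per leaf page times $\beta$ entries. This is the delicate point. If the bound fails I would refine the claim to $\bigoh(\log_\beta A+k/\beta+r)$, with $r$ the number of runs met, and note $r\le k$ with $r$ small in append-dominated states. I would try first to argue that only the two boundary runs can be partial, so interior runs contribute $\lceil B'/\beta\rceil\le B'/\beta+1$ each, and absorb the $+1$ by charging it to the run's leaf entry.

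\textbf{Updates and pages.} For (iii), I would express $\textsc{summarise}(a,b,d)$ as two splits and two joins, following \Cref{thm:complexity}(b). In an $(a,b)$-tree, split and join each touch $\bigoh(1)$ nodes per level along one or two paths, with rebalancing by borrowing or fusion propagating only upward. Path copying for persistence duplicates exactly these pages, which gives $\bigoh(\log_\beta A)$ writes. Discarding the interior subtree of the run is a pointer drop, independent of $B'$. $\textsc{materialise}$ is one split and one merge and has the same cost. For the comparison, a per-handle scheme performs $B'$ key updates. Even batched by page, these touch $\Theta(B'/\beta)$ leaves, each needing a copied path of length $\log_\beta n$.

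Part (iv) is accounting. There are $C+B$ materialised payloads (\Cref{thm:complexity}(c)) of size at most $c_{\max}$, packed into pages. A retained token keeps alive only the pages freshly copied by its update, by (iii).
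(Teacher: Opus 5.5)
Your route is the paper's: standard $(a,b)$-tree height and split/merge page counts, one stab plus one digest page located by an offset, and path copying for persistence. Parts (i), (iii) and (iv) go through with two small repairs. First, after a $\textsc{materialise}$ splits $[a,b]$ at $h$, the right fragment $[h+1,b]$ still points into the same digest. The page holding a handle must therefore be computed from the digest's base handle stored in the entry (the paper's ``first page and in-run offset''), not as $h-a$ from the node's own left endpoint. Slots of re-materialised handles must also stay as tombstones so that offsets remain valid. Second, in (iii), a batched per-handle rewrite of $\Theta(B'/\beta)$ adjacent leaves shares their upper paths and copies only $\bigoh(B'/\beta+\log_\beta n)$ pages. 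So $\Theta(\tfrac{B'}{\beta}\log_\beta n)$ describes a leaf-at-a-time scheme, and your ``even batched'' justification overstates it.

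The genuine gap is in (ii), exactly where you flagged it, and your first-choice repair does not close it. Charging the $+1$ in each $\lceil B'/\beta\rceil$ to the run's leaf entry does not help. An entry is paid for at only $\bigoh(1/\beta)$ of an I/O, since a page holds up to $\beta$ entries. So the digest reads still total $k/\beta+r$, with $r$ the number of summarised runs met, and nothing forces $r=\bigoh(k/\beta)$.

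No sharper counting can rescue it, because under the stated layout the bound is false. Take the adversarial state of \Cref{thm:frag}(d) with spacing $n/s=2$. Each edited handle is re-folded, non-contiguously with its fold-mates, into a fresh digest of its own, since \Cref{alg:fold} builds one digest per maximal contiguous run. That digest occupies its own page. A range of $k$ handles there must read about $k/2$ distinct digest pages, which is $\Omega(k)$ I/Os, against the claimed $\bigoh(\log_\beta A+k/\beta)$. The paper's proof obtains $\bigoh(k/\beta)$ only by summing $B'/\beta$ in place of $\lceil B'/\beta\rceil$ over the runs.

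Your fallback $\bigoh(\log_\beta A+k/\beta+r)$ is the correct statement. It recovers the claimed bound when the runs met average $\Omega(\beta)$ handles, for example in append-only states with $B\ge\beta$, where $r\le k/B+2$. Anything stronger needs a different cold layout, not a better charging argument.
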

\begin{proof}
Heights and split/merge page counts are the standard $(a,b)$-tree bounds for $A$
keys~\citep{BayerMcCreight1972,Comer1979btree}. A summarised handle's interval entry
records its digest's first page and the handle's in-run offset, so its record is read
in one further I/O; a range meeting a run reads that run's $\lceil B'/\beta\rceil$
digest pages, and the handles reported over all runs total $k$, giving $\bigoh(k/\beta)$
beyond the boundary path. Node copying rewrites exactly the touched path, and
coalescing makes a fold one restructuring per run. Buffering updates lowers writes to
the $\bigoh(\tfrac1\beta\log_\beta n)$ amortised regime of write-optimised
indexes~\citep{ONeil1996lsm,BrodalFagerberg2003} at the usual read cost.
\end{proof}

\begin{theorem}[Budgeted window approximation]
\label{thm:window}
Let $S^\star$ maximise $f(S)=\sum_{v\in\cover(S)}w(v)$ subject to
$\sum_{u\in S}c(u)\le\beta$. \Cref{alg:window} returns a feasible set with
$f(S)\ge\tfrac12(1-1/e)\,f(S^\star)$, and \Cref{alg:window-partial} returns a
feasible set with $f(S)\ge(1-1/e)\,f(S^\star)$. If $f$ is modular (records reveal
disjoint context), prefix-by-rank is optimal under a cardinality budget.
\end{theorem}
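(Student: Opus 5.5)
The plan is to reduce the statement to budgeted maximum coverage and then invoke the known guarantees. First I would check the hypotheses. $f(S)=\sum_{v\in\cover(S)}w(v)$ with $w\ge 0$ (personalised-PageRank weights are nonnegative) is a weighted coverage function over the ground set of candidates. Each element $u$ covers the set $\cover(u)=\{u\}\cup R(u)$, and $\cover(S)=\bigcup_{u\in S}\cover(u)$. Hence $f$ is monotone. It is also submodular: the marginal gain of $u$ given $S$ is $\sum_{v\in\cover(u)\setminus\cover(S)}w(v)$, which can only shrink as $S$ grows. The knapsack constraint $\sum c(u)\le\beta$ with $c>0$ is exactly the budgeted maximum coverage setting of \citet{KhullerMossNaor1999}. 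Feasibility of both outputs is immediate, because every selection step checks $\text{spent}+c(u)\le\beta$.

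For \Cref{alg:window} I would reproduce the standard argument. Let the greedy add $u_1,u_2,\dots$ in order, and let $u_{\ell+1}$ be the first element of $S^\star$ that greedy considers but cannot afford. By density-maximality and submodularity, each step satisfies
\[ f(S_i)-f(S_{i-1})\ge \tfrac{c(u_i)}{\beta}\bigl(f(S^\star)-f(S_{i-1})\bigr). \]
Unrolling this through step $\ell+1$, a step that would overshoot the budget, gives $f(S_\ell\cup\{u_{\ell+1}\})\ge(1-1/e)f(S^\star)$. Submodularity then splits the left side as $f(S_\ell)+f(\{u_{\ell+1}\})$, so $\max\{f(S_\ell),f(\{u_{\ell+1}\})\}\ge\tfrac12(1-1/e)f(S^\star)$. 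Returning the better of the greedy set and the best feasible singleton achieves this.

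The main obstacle is a bookkeeping subtlety. The greedy's density argmax ranges only over currently affordable elements, whereas the classical proof runs the greedy over all elements and discards unaffordable ones. I would handle this by defining $u_{\ell+1}$ as the first element of $S^\star$ that becomes unaffordable. Up to that point, every element of $S^\star\setminus S_{i-1}$ is still a competitor, so the displayed inequality holds.

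For \Cref{alg:window-partial} I would follow Khuller--Moss--Naor, equivalently Sviridenko's argument for submodular knapsack. If $|S^\star|\le 3$, the enumeration finds $S^\star$ itself. Otherwise, take as seed $H$ the three elements of $S^\star$ of largest marginal value in greedy order. Run the greedy analysis on the residual function $f_H(X)=f(X\cup H)-f(H)$ with budget $\beta-c(H)$. The first element that fails is worth at most $f(S^\star)/3$ of marginal, by the seed's choice. Combining this with the unrolled bound gives $f(S)\ge(1-1/e)f(S^\star)$.

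Finally, in the modular case $f(S)=\sum_{u\in S}w'(u)$ with the $\cover(u)$ pairwise disjoint. Under a cardinality budget $|S|\le\beta$, an exchange argument shows that taking the $\beta$ largest-weight records is optimal: any optimal set can swap a lower-weight member for a higher-weight non-member without loss. That is exactly prefix-by-rank.
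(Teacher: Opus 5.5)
Your top-level route is the paper's: $f$ is a non-negative, monotone, submodular weighted-coverage function, so the problem is budgeted maximum coverage, the Khuller--Moss--Naor guarantees apply, and the modular case is settled by sorting. The paper only cites those guarantees. You re-derive them, and the re-derivation fails at exactly the step you flagged as the main obstacle.

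With $u_{\ell+1}$ defined as the first element of $S^\star$ to become unaffordable, the displayed inequality is justified only for $i\le\ell$. At $i=\ell+1$ it needs $u_{\ell+1}$ to have marginal density at least that of every element of $S^\star\setminus S_\ell$, and an arbitrary newly unaffordable element need not. Here is a concrete instance. Take $\beta=1$, a record $x$ of cost $0.55$ covering weight $1.1$, a record $y$ of cost $0.45$ covering a disjoint weight $0.9$, and a record $u$ of cost $0.46$ covering weight $0.93$ inside $\cover(x)$. Then $S^\star=\{x,y\}$ and $f(S^\star)=2$. The greedy takes $u$ first (density $\approx 2.02$), which makes $x$ unaffordable but not $y$, so your rule sets $u_2=x$. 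But $f(\{u,x\})-f(\{u\})=0.17<0.55\,(2-0.93)$, and $f(\{u,x\})=1.1<(1-1/e)\cdot 2\approx 1.26$. So your intermediate claim $f(S_\ell\cup\{u_{\ell+1}\})\ge(1-1/e)f(S^\star)$ is false. The algorithm itself is fine: it next adds $y$ and reaches $1.83$.

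The repair is to choose the overshooting element by density. Let $\ell$ be the first index at which the element $o$ of $S^\star\setminus S_\ell$ maximising $f(o\mid S_\ell)/c(o)$ is unaffordable, and set $u_{\ell+1}:=o$.
\begin{itemize}
\item For $i\le\ell$, that densest element was affordable at $S_{i-1}$, so the greedy's choice has at least its density, hence at least the density of every element of $S^\star\setminus S_{i-1}$.
\item At $i=\ell+1$, the inequality holds by the choice of $o$.
\item Since $c(S_\ell)+c(o)>\beta$, the unrolling still gives $1-1/e$.
\end{itemize}
Equivalently: spending only grows, so an element once unaffordable stays unaffordable. Hence \Cref{alg:window} adds exactly the elements that the Khuller--Moss--Naor greedy adds when it scans all elements and discards unaffordable ones. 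Their analysis then applies verbatim, and this is what the paper's citation tacitly relies on.

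Your partial-enumeration argument needs two fixes. First, ``the first element that fails'' must be chosen by density in the same way. Second, the seed property must be $f(x\mid H)\le f(H)/3$, which follows from $H$ being the first three elements of $S^\star$ in greedy order; your bound $f(x\mid H)\le f(S^\star)/3$ is too weak. With your bound the combination yields only $(1-1/e-1/3)f(S^\star)+f(H)/e$. With the correct one, $f(H)+(1-1/e)\bigl(f(S^\star)-f(H)\bigr)-f(H)/3\ge(1-1/e)f(S^\star)$, because $1/e>1/3$. Your exchange argument for the modular case matches the paper's.
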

\begin{proof}
$f$ is non-negative, monotone---adding a record only enlarges $\cover(S)$---and
submodular, as the marginal $\sum_{v\in\cover(u)\setminus\cover(S)}w(v)$ is
non-increasing in $S$; with weights $w\ge 0$ it is a weighted coverage function,
so the instance is budgeted maximum coverage. For such instances the modified
greedy of \Cref{alg:window} attains $\tfrac12(1-1/e)$ and the size-three partial
enumeration of \Cref{alg:window-partial} attains
$1-1/e$~\citep{KhullerMossNaor1999}, the cardinality tightness of the latter being
classical~\citep{NemhauserWolseyFisher1978}. If $f$ is modular,
$f(S)=\sum_{u\in S}g(u)$ with $g(u)=w(u)$, and a cardinality budget is maximised by
the highest-value items---prefix-by-rank.
\end{proof}

\begin{proposition}[Prefix-by-rank has no constant-factor guarantee]
\label{prop:adversarial}
For every $k$ there is a budgeted maximum-coverage instance on which prefix-by-rank
covers a $\tfrac{1+\epsilon}{k}$ fraction of the optimum, while \Cref{alg:window}
and \Cref{alg:window-partial} are optimal.
\end{proposition}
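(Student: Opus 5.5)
The plan is an explicit gadget: a family of instances on which prefix-by-rank picks many records whose covers overlap, while one cheap record covers almost everything. Fix $k$ and $\epsilon>0$ small, with unit costs and a cardinality budget $\beta=k$.

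\textbf{The instance.}
\begin{itemize}
\item Create $k$ ``decoy'' candidates $u_1,\dots,u_k$ that all reference one common hub handle $z$. Give $z$ weight $1+\epsilon$ and give each $u_i$ weight $0$, so $\cover(u_i)=\{u_i,z\}$.
\item Create $k$ ``good'' candidates $g_1,\dots,g_k$, each referencing its own private handle $y_j$ of weight $1$, with $w(g_j)=0$.
\item Prefix-by-rank orders candidates by relevance. Here the ranking should be taken as the diffused score of the candidate's neighbourhood, for example $\sum_{v\in\cover(u)}w(v)$, which is what a relevance-ordered prefix uses. Under it each $u_i$ scores $1+\epsilon$ and each $g_j$ scores $1$.
\end{itemize}

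\textbf{Prefix-by-rank.} It takes the $k$ decoys first and exhausts the budget. It covers only $z$, for value $1+\epsilon$.

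\textbf{The optimum.} The optimum takes all $g_j$, for value $k$; taking $g_1,\dots,g_{k-1}$ together with one $u_i$ gives $k-1+1+\epsilon$, which is even better. Either way $\mathrm{OPT}\ge k$, so the prefix ratio is at most $(1+\epsilon)/k$. Matching the exact stated fraction $\tfrac{1+\epsilon}{k}$ only requires normalising the weights so that $\mathrm{OPT}=k$ exactly; to do this, give $z$ a slightly lower weight, or drop one good record and recompute.

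\textbf{The greedy algorithms.}
\begin{itemize}
\item \Cref{alg:window}: the first pick has density $1+\epsilon$ (some $u_i$). After that, every remaining $u_i$ has marginal $0$ and every $g_j$ has marginal $1$, so greedy takes $k-1$ good records. Its value is $k+\epsilon$, which is optimal.
\item \Cref{alg:window-partial}: it contains the greedy run from the empty seed, so it is at least as good and also optimal.
\end{itemize}

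\textbf{Main obstacle.} The hard step is pinning down what ``prefix-by-rank'' ranks by, because the gadget fails if the ranking were $w(u)$ alone. I would state the convention explicitly, namely rank by the candidate's own cover weight. If the paper instead ranks by $w(u)$, put the weight on the candidates themselves: give $w(u_i)=(1+\epsilon)/k$ and let the $u_i$ share a heavy hub, and adjust the private weights so prefix still loses by a factor of $k$. I would then recheck that both greedy algorithms' marginal computations still yield the optimum.
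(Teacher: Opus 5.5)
There is a gap, and it sits exactly where you suspected. The paper fixes the convention against your main instance. Prefix-by-rank takes records ``in decreasing relevance'', and relevance means the record's own diffused weight $w(u)$. The paper's proof ranks its decoy by its relevance $1+\epsilon$, and the modular clause of \Cref{thm:window} uses $g(u)=w(u)$. In your main instance every candidate has $w(u_i)=w(g_j)=0$, so under the paper's definition the prefix choice is pure tie-breaking. It may take all the $g_j$ and land within a factor $k/(k+\epsilon)$ of optimal. Ranking by singleton coverage $\sum_{v\in\cover(u)}w(v)$ is a reasonable variant, but adopting it changes the statement.

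The branch that actually applies is your fallback, which you leave unworked. The shared hub does nothing useful there: prefix's picks cover it too, so it only inflates prefix's value and forces you to rescale the private weights. A clean version is as follows.
\begin{itemize}
\item Take unit costs, $\beta=k\ge 2$, and $0<\epsilon<k-1$.
\item Decoys have $w(u_i)=(1+\epsilon)/k$ and $\cover(u_i)=\{u_i\}$.
\item Good records have $w(g_j)=0$ and $\cover(g_j)=\{g_j,y_j\}$, with $w(y_j)=1$ and the $y_j$ not themselves candidates.
\end{itemize}
Prefix takes the $k$ decoys and covers $1+\epsilon$. Since $(1+\epsilon)/k<1$, the optimum is exactly $k$. \Cref{alg:window} compares density $1$ against $(1+\epsilon)/k$ and takes every $g_j$. \Cref{alg:window-partial} includes the empty seed, so both algorithms are optimal.

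Your normalisation remark is also off. As long as the decoys still outrank the $g_j$ under your convention (i.e.\ $w(z)>1$), lowering $w(z)$ leaves $\mathrm{OPT}=k-1+w(z)>k$. Once $w(z)\le 1$, the decoys no longer outrank the $g_j$. Dropping a good record leaves $\mathrm{OPT}=k+\epsilon$ unchanged. None of this matters much, since $(1+\epsilon)/(k+\epsilon)\le(1+\epsilon)/k$ already rules out a constant factor.

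For comparison, the paper uses a knapsack gadget. It pits one decoy of relevance $1+\epsilon$ and cost $\beta$, covering only itself, against $k$ records of relevance $1$ and cost $\beta/k$ with disjoint covers. There every record's own relevance equals its singleton coverage, so the ranking ambiguity never arises, and prefix fails purely because it ignores cost. Your instance (in either form) keeps costs uniform, so it exposes the other defect the paper names. That defect is blindness to context revealed through references, or to overlap under your coverage ranking. It shows prefix-by-rank already fails under a cardinality budget, which the paper's cost-based instance does not show.
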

\begin{proof}
Take one \emph{decoy} record of relevance $1+\epsilon$ and cost $\beta$ that reveals
only itself, and $k$ records of relevance $1$ and cost $\beta/k$ that reveal
pairwise-disjoint elements. Prefix-by-rank orders by relevance, takes the decoy
first, exhausts the budget, and covers $1+\epsilon$. The $k$ cheap records have
density $k/\beta$, above the decoy's $(1+\epsilon)/\beta$, so both greedies take
them and cover $k$, which is also the optimum; the ratio for prefix is
$(1+\epsilon)/k\to 0$. \Cref{sec:empirical} confirms the gap numerically.
\end{proof}

Because the recent records are the cheapest to resolve and the most often
queried, the \cascade{} also exhibits the working-set behaviour studied for
self-adjusting structures~\citep{SleatorTarjan1985splay,SleatorTarjan1985amortized}:
queries that stay within the hot stratum touch only materialised payloads.

\section{Implementation}
\label{sec:impl}

The reference implementation is a self-contained Python package. The \VII{} is a
path-copying treap (standard library only); the \cascade{} adds the strata,
folding, digests, and windows. Four comparison structures share the operation
interface. \textsf{Flat} keeps every version materialised in one table.
\textsf{NaiveTier} keeps a bounded hot table over a cold tier that is folded by
discarding all per-handle state; \textsf{ForwardTier} adds a never-reclaimed
per-handle version counter, which restores version authority but not
addressability. The progression \textsf{NaiveTier}$\to$\textsf{ForwardTier}$\to$\cascade{}
isolates the two mechanisms---version authority and post-fold addressability.
\textsf{HashSpine} is the strong baseline: it is the \cascade{} stripped of
persistence and order, a single hash map that is the sole version authority and
retains the same constant-size spine through folding. It is therefore anomaly-free
and bounded in working set, and resolves in expected $\bigoh(1)$, matching the
\cascade{} on every current-state metric; what it cannot do is read a past epoch
(it keeps no version history) or report an ordered key range across strata in
$\bigoh(\log n+k)$ (a hash imposes no order). It isolates precisely what the
\emph{persistent, ordered} index buys. An independent oracle tracks the exact
current and historical version of every handle and classifies each resolution by
Definition~\ref{def:anomaly}. All workloads and measurements are seeded and reproduce
bit-for-bit; the package, the result files, and the figure scripts accompany the
paper as ancillary files.

\section{Empirical study}
\label{sec:empirical}

\paragraph{Workload.}
A workload is an append-mostly stream: each step appends a record---with a few
references drawn toward recent handles with a heavy tail to old ones---or, with
probability $0.3$, supersedes an existing handle chosen with a bias toward older
records, so that edits reach handles that have already been folded. Record costs
are log-normal. Accesses follow a mixture that is local (recency-biased) with
probability $0.75$ and global (uniform over the whole history) otherwise,
modelling references both to the recent working set and to long-lived anchors.
Unless noted, the hot capacity is $C=256$ and the fold block $B=64$.

\paragraph{Reference integrity.}
\Cref{tab:headline} reports a run of $4\times10^4$ operations
($n=27{,}894$ handles) under $5\times10^4$ accesses. The tiered baselines
\textsf{NaiveTier} and \textsf{ForwardTier} keep a small footprint yet mis-resolve
about a quarter of accesses and leave only $1.3\%$ of the history addressable;
\textsf{ForwardTier}'s version counter removes the stale reads that afflict
\textsf{NaiveTier} but cannot recover the folded records. \textsf{Flat} is correct
but materialises every payload. \textsf{HashSpine} and the \cascade{} run the
identical ageing and folding policy: both are anomaly-free, fully addressable, and
bounded in working set, with the same $7{,}038$-unit working set against
\textsf{Flat}'s $6.7\times10^5$, so reference integrity costs essentially nothing.
They differ in the \emph{index}. The hash spine keeps one entry per handle
($27{,}894$ nodes); the \cascade{} coalesces folded runs and keeps $10{,}983$, so
its total footprint ($45{,}608$) is below the hash spine's ($62{,}519$) even though
it additionally serves snapshot-consistent past-epoch reads and ordered cross-tier
ranges (\Cref{tab:capabilities}). The ordered, persistent index is not a cost paid
for those capabilities; here it is cheaper than the hash that lacks them.

\begin{table}[t]
\centering
\caption{One operating point ($n=27{,}894$ handles, $5\times10^4$ accesses,
$C{=}256$, $B{=}64$). ``Addressable'' is the fraction of the whole history that
still resolves; ``working set'' is the materialised payload; ``index'' is the
number of index nodes; ``total'' is the whole materialised footprint (working set,
index, and cold digests), in abstract units.}
\label{tab:headline}
\small
\begin{tabular}{l r r r r r}
\toprule
Design & Anomaly rate & Addressable & Working set & Index nodes & Total \\
\midrule
\textsf{Flat}        & $0.000$ & $1.000$ & $670{,}507$ & $27{,}894$ & $670{,}507$ \\
\textsf{NaiveTier}   & $0.249$ & $0.013$ & $13{,}295$  & $371$ & $13{,}295$ \\
\textsf{ForwardTier} & $0.248$ & $0.013$ & $41{,}189$  & $371$ & $41{,}189$ \\
\textsf{HashSpine}   & $0.000$ & $1.000$ & $7{,}038$ & $27{,}894$ & $62{,}519$ \\
\textbf{\cascade{}}  & $\mathbf{0.000}$ & $\mathbf{1.000}$ & $\mathbf{7{,}038}$ & $\mathbf{10{,}983}$ & $\mathbf{45{,}608}$ \\
\bottomrule
\end{tabular}
\end{table}

\begin{table}[t]
\centering
\caption{Capabilities at a bounded working set. Among designs that bound the
working set, only the \cascade{} is anomaly-free \emph{and} serves
snapshot-consistent past-epoch reads \emph{and} reports ordered cross-tier ranges;
the price is $\bigoh(\log n)$ rather than expected $\bigoh(1)$ resolution.}
\label{tab:capabilities}
\small
\begin{tabular}{l c c c c c}
\toprule
Design & Anomaly-free & Bounded W.S. & Snapshots & Ordered range & Resolve \\
\midrule
\textsf{Flat}        & yes & no  & no  & no  & $\bigoh(1)$ \\
\textsf{NaiveTier}   & no  & yes & no  & no  & $\bigoh(1)$ \\
\textsf{ForwardTier} & no  & yes & no  & no  & $\bigoh(1)$ \\
\textsf{HashSpine}   & yes & yes & no  & no  & $\bigoh(1)$ \\
\textbf{\cascade{}}  & \textbf{yes} & \textbf{yes} & \textbf{yes} & \textbf{yes} & $\bigoh(\log n)$ \\
\bottomrule
\end{tabular}
\end{table}

\Cref{fig:anomaly} sweeps the hot capacity. As the working set tightens, both
tiered baselines lose addressability---from $59\%$ down to under $0.5\%$ of the
history---and their access-anomaly rate climbs toward the share of accesses that
reach beyond the hot stratum, while the \cascade{} stays at zero anomalies and
full addressability throughout. The separation between \textsf{NaiveTier} and
\textsf{ForwardTier} is exactly the stale component, which the version counter
removes.

\begin{figure}[t]
\centering
\includegraphics[width=\linewidth]{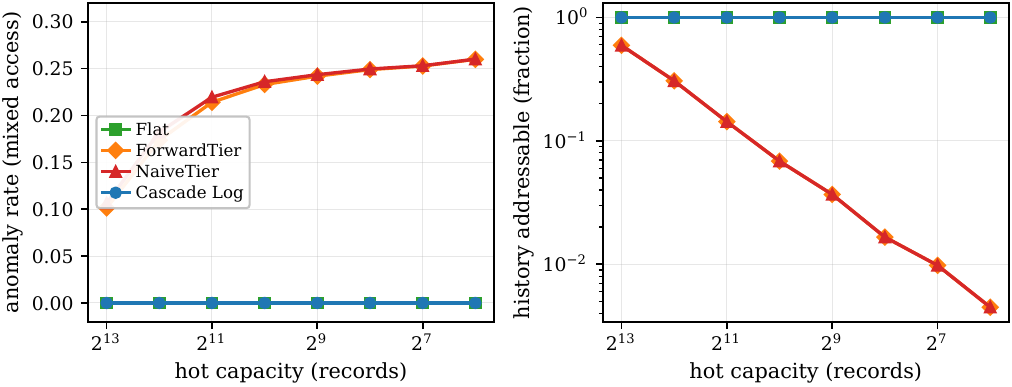}
\caption{Reference integrity as the hot capacity tightens (capacity decreases to
the right). Left: anomaly rate under the mixed access stream. Right: fraction of
the whole history that still resolves (log scale). \textsf{Flat} and the
\cascade{} coincide at zero anomalies and full addressability; the tiered
baselines degrade.}
\label{fig:anomaly}
\end{figure}

\paragraph{Footprint and scaling.}
\Cref{fig:memory} grows the history to $n\approx4\times10^5$ (half a million
operations). \textsf{Flat} grows linearly at the average payload size. The
\cascade{}'s \emph{working set}---its materialised payload---stays flat at about
seven thousand units, independent of $n$, as \Cref{thm:complexity} predicts in
ephemeral mode; its total grows linearly but about nineteen times below full
materialisation at the largest size (a ratio that widens with record size), and
below \textsf{HashSpine} throughout because its index coalesces (next paragraph).
Retaining $R$ snapshot tokens raises the index from one node per handle to about
$3.8$ at $R=100$ (\Cref{fig:snapshot}), the retained-mode overhead of
\Cref{thm:complexity}, growing sub-linearly per token as tokens share structure.

\begin{figure}[t]
\centering
\begin{minipage}{0.6\linewidth}\centering
\includegraphics[width=\linewidth]{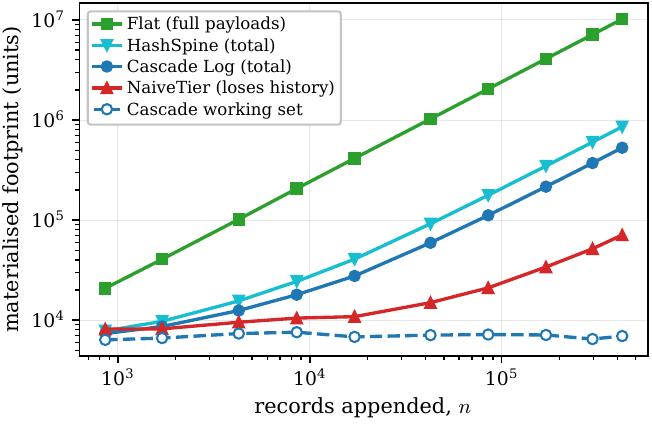}
\end{minipage}\hfill
\begin{minipage}{0.38\linewidth}\centering
\includegraphics[width=\linewidth]{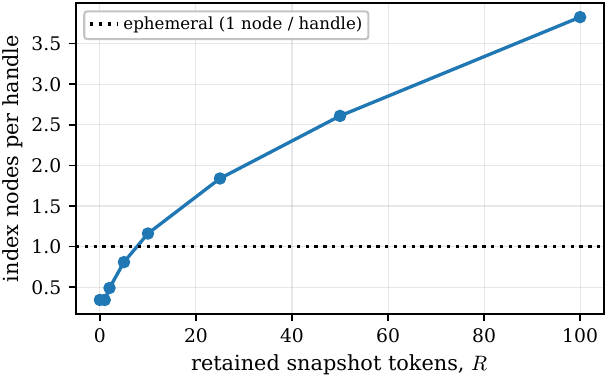}
\end{minipage}
\caption{Left: materialised footprint versus history length (log--log); the
\cascade{} working set (dashed) is bounded, and its total stays below
\textsf{HashSpine} and far below \textsf{Flat}. Right: retained-mode index size
against the number of retained snapshot tokens (\Cref{thm:complexity}).}
\label{fig:memory}
\label{fig:snapshot}
\end{figure}

\paragraph{Compaction cost and the index.}
\Cref{fig:fold} measures the index work of folding. With no edits a fold block is
a single contiguous run and collapses in exactly one splice, so the splices per
record fall as $1/B$ and the amortised index cost of compaction vanishes with the
block size---the $\bigoh(1)$-per-appended-record claim of
\Cref{thm:complexity}(b). Edits fragment runs, each fragmentation adding one
splice, so the curves rise with the edit rate but stay essentially flat in $B$.
The same coalescing shrinks the index (\Cref{fig:index}): the \cascade{} indexes
the history in a number of nodes that is a \emph{decreasing} fraction of $n$
(about $0.23$ here under a $15\%$ edit rate, and $1/B$ for append-only) against the
hash spine's one node per handle. The \cascade{} therefore folds in amortised
constant index work per appended record and indexes the history in $o(n)$ nodes,
where a per-handle forwarding design does $\Theta(B)$ work per block and keeps $n$
nodes.

\begin{figure}[t]
\centering
\begin{minipage}{0.49\linewidth}\centering
\includegraphics[width=\linewidth]{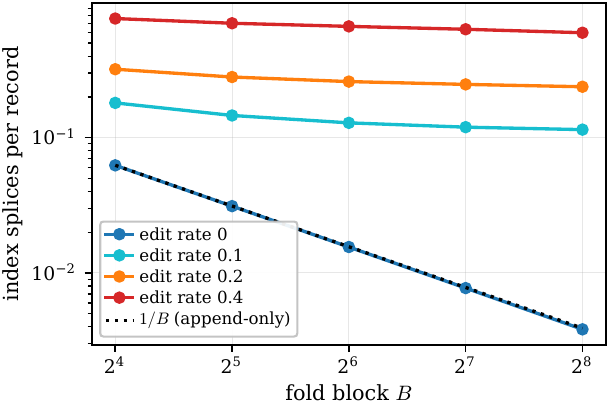}
\end{minipage}\hfill
\begin{minipage}{0.49\linewidth}\centering
\includegraphics[width=\linewidth]{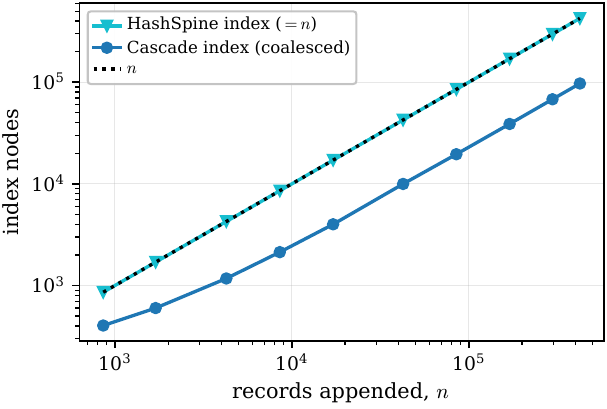}
\end{minipage}
\caption{Left: index splices per folded record versus block size $B$. Append-only
folding (edit rate $0$) lies on $1/B$---one splice per block---so its per-record
index work vanishes with $B$; edits add a flat, rate-proportional fragmentation
charge. Right: index nodes versus $n$; the \cascade{} coalesces folded runs to an
$o(n)$ index, while the hash spine keeps $n$ entries.}
\label{fig:fold}
\label{fig:index}
\end{figure}

\paragraph{A persistent ordered map, and ordered range.}
To isolate what \emph{coalescing} contributes---as opposed to ordering and
persistence, which a copy-on-write ordered map already provides---we add a fifth
structure, \textsf{PersistentMap}: the \cascade{}'s policy over a persistent treap
that keeps one node per handle and rewrites each folded handle individually, with
no coalescing; it is the in-memory analogue of a multiversion or copy-on-write
$B$-tree~\citep{BeckerMVBT1996}. It is anomaly-free and serves snapshots and ordered
range like the \cascade{}, but its index is $n$ nodes and each fold does
$\Theta(B\log n)$ index writes. \Cref{fig:pmap}~(left) confirms the gap: at $n=10^5$
the \cascade{}'s coalesced index is $4.3\times$ smaller and its folds do $6\times$
fewer index writes, with ordering and persistence held fixed---so coalescing is the
sole cause of the smaller index. \Cref{fig:pmap}~(right) measures the ordered
capability \textsf{HashSpine} lacks (Proposition~\ref{prop:order}): reporting a range
of $k$ handles, the \cascade{} costs $\bigoh(\log n + k)$ and tracks $k$, while
\textsf{HashSpine} must scan its whole table at $\approx 2.9$\,ms per query
independent of $k$---two to three orders of magnitude slower until $k$ nears $n$.

\begin{figure}[t]
\centering
\begin{minipage}{0.49\linewidth}\centering
\includegraphics[width=\linewidth]{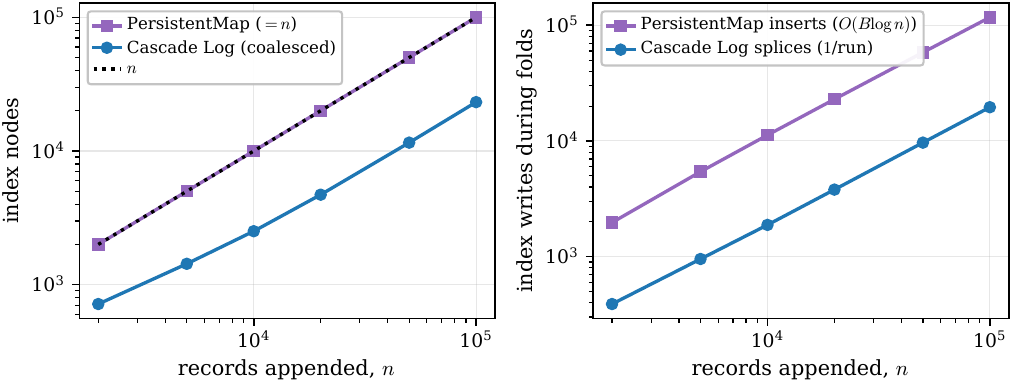}
\end{minipage}\hfill
\begin{minipage}{0.49\linewidth}\centering
\includegraphics[width=\linewidth]{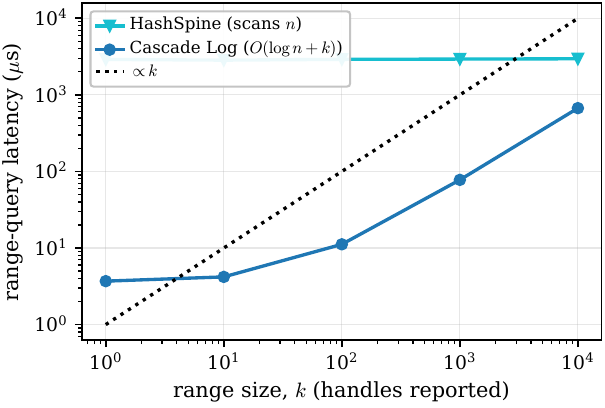}
\end{minipage}
\caption{Left: index nodes versus $n$ for the \cascade{} and \textsf{PersistentMap}
(an ordered, persistent map without coalescing); coalescing alone accounts for the
gap. Right: ordered-range latency versus the number of handles reported $k$; the
\cascade{} scales as $k$ while the order-less \textsf{HashSpine} pays a full scan.}
\label{fig:pmap}
\label{fig:range}
\end{figure}

\paragraph{Robustness across the edit rate.}
\Cref{fig:sweep} repeats the integrity and index measurements across supersede
probabilities from $0.1$ to $0.5$, five seeds each, with $95\%$ confidence
intervals. The \cascade{} and \textsf{HashSpine} hold exactly zero anomalies
throughout, while \textsf{NaiveTier} and \textsf{ForwardTier} sit near $0.25$ with
tight spread. The right panel is the honest face of the index claim: the
\cascade{}'s index/$n$ ratio climbs from $0.18$ at a $10\%$ edit rate to $0.58$ at
$50\%$, well above the append-only floor $1/B$. The index is $o(n)$ only when
appends dominate, and a smaller-constant $\Theta(n)$ under sustained edits---exactly
the qualified claim, shown rather than asserted.

\begin{figure}[t]
\centering
\includegraphics[width=\linewidth]{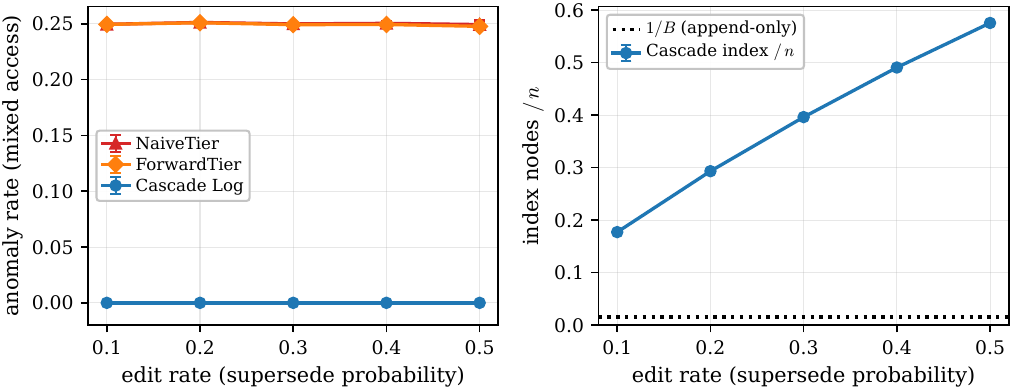}
\caption{Robustness over five seeds per point ($95\%$ confidence intervals). Left:
anomaly rate versus edit rate; the \cascade{} stays at zero. Right: the \cascade{}
index as a fraction of $n$ rises with the edit rate (append-only floor $1/B$
dotted), so the $o(n)$ index is an append-dominance property, not a universal one.}
\label{fig:sweep}
\end{figure}

\paragraph{Adversarial fragmentation.}
\Cref{fig:frag} drives the worst case of \Cref{thm:frag}: after appending
$6\times10^4$ records it supersedes $s$ handles at evenly spaced positions---the
pattern that maximises fragmentation---and traces the index size directly. It grows
\emph{linearly} in $s$, $A=\Theta(s)$ at about two nodes per edit, for both the spread
adversary and a distribution-free random-position variant, rising toward the $n$-node
ceiling near $s\approx n/2$ and never exceeding \textsf{PersistentMap}'s $n$; once
nearly every handle has been edited the re-materialised handles re-fold contiguously
and $A$ collapses back toward $\bigoh(a/B)$. The right panel confirms resolution stays
$\bigoh(\log A)$
even at maximum fragmentation, tracking $\log_2 A$. The structure degrades gracefully
and exactly to the bound: its index \emph{is} the fragmentation---no more, and by
\Cref{thm:frag}(b) no less than any reference-stable ordered structure must pay.

\begin{figure}[t]
\centering
\includegraphics[width=\linewidth]{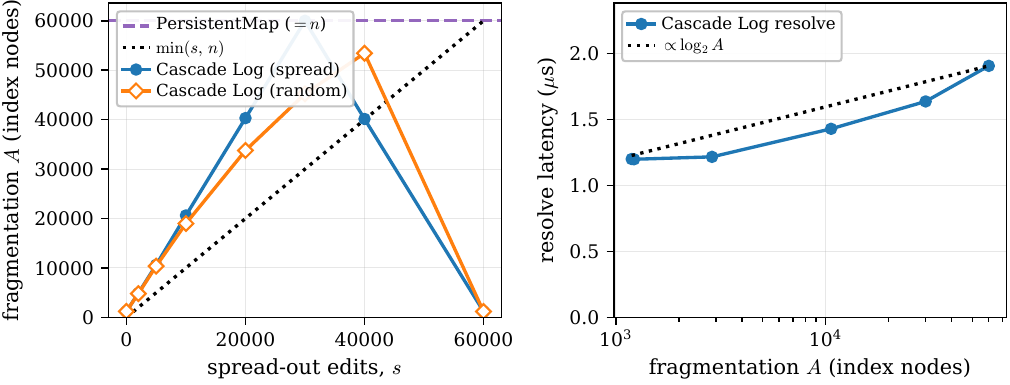}
\caption{Adversarial fragmentation (\Cref{thm:frag}). Left: index nodes $A$ versus the
number of evenly spaced (and randomly placed) edits $s$; the \cascade{} grows as
$\Theta(s)$ and never exceeds \textsf{PersistentMap}'s $n$. Right: resolution latency
versus $A$, tracking $\log_2 A$, so queries stay logarithmic even at worst-case
fragmentation.}
\label{fig:frag}
\end{figure}

\paragraph{Latency.}
\Cref{fig:latency} (left) reports resolution latency to $n=10^6$. The \cascade{}
resolves in $0.7$--$4.1\,\mu$s, scaling as the logarithm of its coalesced index;
the index height stays below the $2\log_2 n$ guide (right panel) precisely because
folded runs collapse. \textsf{HashSpine} resolves by one hash probe in about
$0.8\,\mu$s at $n=10^6$---a constant factor below the \cascade{}, the price it pays
for the ordered, snapshot-consistent access a hash cannot give. Append is a single
logarithmic splice per operation.

\begin{figure}[t]
\centering
\includegraphics[width=\linewidth]{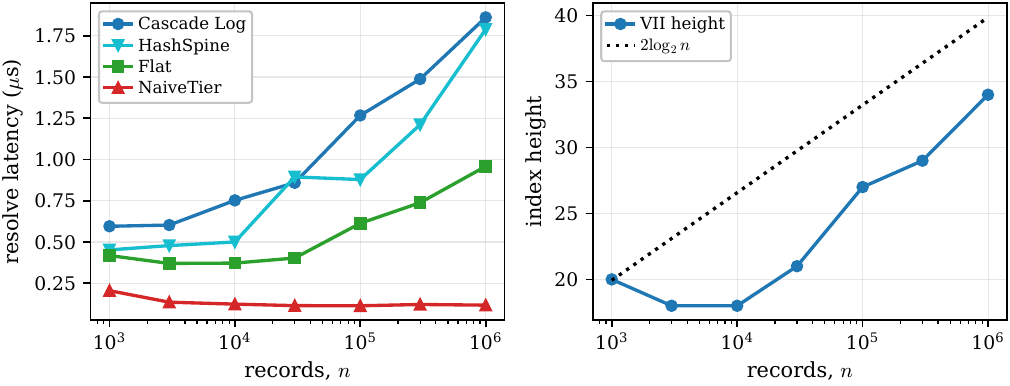}
\caption{Left: resolution latency versus history length to $n=10^6$ (log $x$);
\textsf{HashSpine} is a constant factor below the \cascade{}, the tiered baselines
faster still but unable to read the past or, for \textsf{NaiveTier}, correctly.
Right: \VII{} height against a $2\log_2 n$ guide---logarithmic with a small
constant.}
\label{fig:latency}
\end{figure}

\paragraph{Windows.}
\Cref{fig:window} evaluates window selection on a reference graph of $4000$
records, averaging over recent anchors and budgets expressed as a fraction of the
candidate cost. The partial-enumeration of \Cref{alg:window-partial} is
indistinguishable from the exact optimum (computed by brute force and a
mixed-integer program), staying above $0.99$ of it and---like the simple greedy of
\Cref{alg:window}, whose worst case here is $0.85$---never below its guarantee. The
prefix heuristic trails by up to thirteen points of coverage at tight budgets. The
gap is unbounded in the worst case: on the constructed family of
Proposition~\ref{prop:adversarial}, \Cref{fig:adversarial} shows prefix-by-rank falling as
$(1+\epsilon)/k$ while both greedies stay at the optimum. Simple-greedy windows are
selected in under a millisecond; the partial enumeration costs tens of milliseconds
on these candidate sets.

\begin{figure}[t]
\centering
\begin{minipage}{0.66\linewidth}\centering
\includegraphics[width=\linewidth]{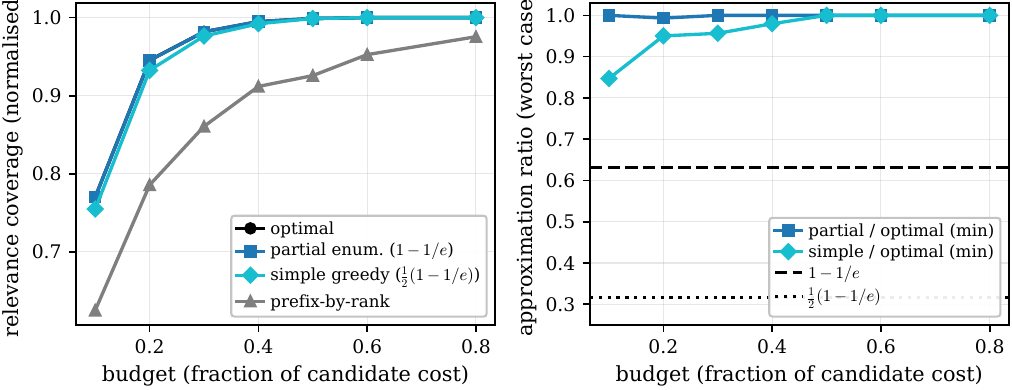}
\end{minipage}\hfill
\begin{minipage}{0.32\linewidth}\centering
\includegraphics[width=\linewidth]{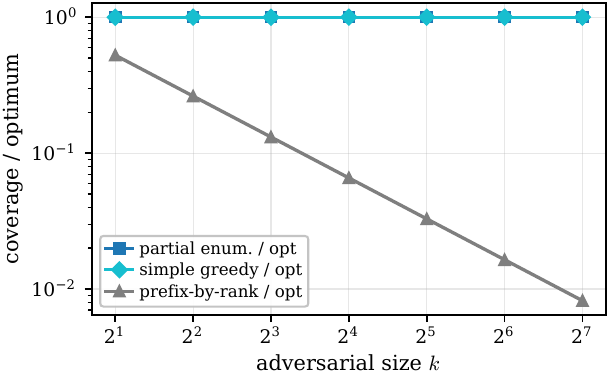}
\end{minipage}
\caption{Budgeted windows. Left pair: relevance coverage and worst-case
approximation ratio of prefix-by-rank, the simple greedy ($\tfrac12(1-1/e)$), the
partial enumeration ($1-1/e$), and the optimum. Right: on the family of
Proposition~\ref{prop:adversarial}, prefix-by-rank is a $\Theta(1/k)$ fraction of the optimum
while both greedies are optimal (log--log).}
\label{fig:window}
\label{fig:adversarial}
\end{figure}

\paragraph{Summary.}
Among designs that bound the working set, \textsf{HashSpine} and the \cascade{} are
the anomaly-free ones; the tiered baselines that are not lose almost the entire
history. Against the hash spine the \cascade{} gives up a constant factor in
resolution latency and, in return, coalesces its index---to $o(n)$ nodes on
append-dominated streams, and a smaller constant times $n$ under sustained edits,
in both regimes below the hash spine's $n$---folds in amortised constant index work
per appended record, and serves snapshot-consistent past-epoch reads and ordered
cross-tier ranges that a hash cannot. \Cref{thm:lower} shows the per-handle version
both retain cannot be lossily compressed, and Proposition~\ref{prop:order} that the
ordered range the \cascade{} adds is provably beyond a hash.

\section{Related work}
\label{sec:related}

\paragraph{Persistence.}
Path copying and node copying make pointer structures persistent at logarithmic
or constant amortised overhead~\citep{DriscollSarnakSleatorTarjan1989}, and
persistent search trees underlie classical multiversion
queries~\citep{SarnakTarjan1986}. The \VII{} uses persistence not for geometric
queries but to make each fold and edit produce an immutable instant that windows
can be read against; the snapshot consistency of \Cref{thm:linear} follows from
persistence rather than from locking, in the linearizable
sense of~\citet{HerlihyWing1990}.

\paragraph{Tiered and write-optimised storage.}
Folding aged blocks into summaries is the organising idea of the log-structured
merge-tree~\citep{ONeil1996lsm}, where membership filters~\citep{Bloom1970} avoid
fruitless probes into cold levels and amortised analysis bounds the merge
cost~\citep{Tarjan1985amortized,BentleySaxe1980}. These designs optimise
throughput for opaque keys; the \cascade{} adds what they do not track---a stable
per-record handle that must keep resolving across the very compaction that moves
its target.

\paragraph{External memory.}
The motivating setting is tiered storage, and \Cref{thm:io} carries the design into
the I/O model~\citep{AggarwalVitter1988}: an $(a,b)$-tree
realisation~\citep{BayerMcCreight1972,Comer1979btree} resolves and folds in
$\bigoh(\log_\beta n)$ I/Os and reports a range in $\bigoh(\log_\beta n + k/\beta)$,
bounded below by the dictionary update--query trade-off
of~\citet{BrodalFagerberg2003} that the log-structured merge-tree also navigates.
Our measurements are RAM-resident (\Cref{sec:empirical}); a paginated implementation
and a head-to-head against tuned LSM and B-tree stores are the systems counterpart
this paper does not attempt.

\paragraph{Multiversion and temporal indexing.}
Serving a read as of a past instant is the business of multiversion concurrency
control~\citep{BernsteinGoodman1983} and of versioned and temporal indexes---the
asymptotically optimal multiversion B-tree~\citep{BeckerMVBT1996} and the access
methods surveyed by~\citet{SalzbergTsotras1999}---which answer time-travel queries
at scale and concurrently. These index opaque keys and keep every version; the
\cascade{} differs in the two ways that define its niche. It bounds the materialised
working set by \emph{lossily summarising} aged records (\Cref{def:anomaly}), trading
payload fidelity for a footprint independent of history length, where a
multiversion B-tree retains every version in full; and it guarantees
\emph{reference integrity}, a stable handle resolving to the right version across the
compaction that summarises its target, which the versioned-index literature does not
pose because its keys neither migrate nor summarise. The two are complementary: a
multiversion B-tree is one concurrent, external-memory realisation of the persistent
ordered map the \cascade{} needs (\Cref{thm:io}), and the \cascade{}'s own
contribution is the coalescing fold and the cross-tier guarantees layered on it.

\paragraph{Layered search.}
Searching ordered layers in one logarithmic pass is the province of fractional
cascading~\citep{ChazelleGuibas1986} and of probabilistic layered
lists~\citep{Pugh1990skiplists}; B-trees give the external-memory
counterpart~\citep{Comer1979btree}. The \cascade{}'s cross-tier reporting reaches
the same $\bigoh(\log n + k)$ through a single multiversion index, and any of
these techniques may realise it over per-stratum indexes instead.

\paragraph{Ranking and selection.}
Diffusing relevance by a personalised random walk is personalised
PageRank~\citep{BrinPage1998,Haveliwala2003}; choosing a budgeted subset that
maximises a submodular coverage is budgeted maximum
coverage~\citep{KhullerMossNaor1999}, whose greedy inherits the classical
$1-1/e$ analysis for submodular maximisation~\citep{NemhauserWolseyFisher1978}.
The window of \Cref{sec:window} composes the two over the reference graph and
recovers prefix-by-rank as the modular special case.

\paragraph{Sequence buffers and diffs.}
Editor buffers that keep a long mutable sequence in balanced
pieces~\citep{BoehmAtkinsonPlass1995ropes} and difference algorithms that compute
edits between versions~\citep{Myers1986diff} address the per-record content that a
\cascade{} record may hold; they are orthogonal to, and composable with, the
cross-tier addressing studied here.

\section{Conclusion}
\label{sec:conclusion}

A tiered append sequence keeps its working set bounded, but the obvious way to do
so discards the per-record state that lets handles resolve across tier
boundaries, and the result fails exactly where the tiers meet. The \cascade{}
shows that the two goals are compatible at a price that is provably small: a
persistent interval map as the single authority on every handle, folding that
coalesces a run into one node in a single splice---so compaction is amortised
constant per appended record and the index is sublinear---and digests that keep
each handle's version, the information a lower bound shows is irreducible. The
result is resolution that is total, snapshot-consistent, and anomaly-free at
logarithmic cost, with a working set independent of history length; the budgeted
window is a standard maximum-coverage instance layered on top. The reference
implementation is single-threaded and pays the constant factor of a pure-language
persistent tree; a concurrent, de-amortised, external-memory realisation
(\Cref{sec:related}), and digests that compress versions succinctly, are natural
next steps.

\paragraph{Reproducibility.}
The implementation, the seeded workloads, the oracle, the benchmark harness, the
raw result files, and the figure scripts are provided as ancillary files; a single
command regenerates every number, table, and figure (\Cref{app:repro}).

\appendix
\section{Reproduction details}
\label{app:repro}
The ancillary directory \texttt{anc/} contains a self-contained Python package
(\texttt{numpy}, \texttt{scipy}, \texttt{matplotlib}, \texttt{pytest}; the core
structure uses only the standard library). From \texttt{anc/}, the command
\texttt{./run\_all.sh} runs the test suite (\texttt{python -m pytest cascade/tests},
$86$ checks), the benchmark harness (\texttt{python -m cascade.bench}), the extended
study (\texttt{python -m cascade.extra}), and the figure scripts
(\texttt{python -m cascade.plots}). Every workload is seeded, so the numbers
reproduce bit-for-bit; the seeds are the literal arguments to \texttt{gen\_plan} in
\texttt{cascade/bench.py} and \texttt{cascade/extra.py}. Each experiment writes one
CSV to \texttt{anc/results/} with a header row naming every column, and each figure
reads exactly one such file; \Cref{tab:repro} maps result files to artifacts.
Runtimes were measured on an Apple silicon laptop with CPython~$3.x$; the harness
completes in roughly thirteen minutes and the extended study in about half a minute.
The implementation is a single-threaded reference in a managed language, so absolute
latencies carry the constant factor of CPython and are reported only relative to the
in-house baselines run identically; the asymptotic claims, not the constants, are the
contribution.

\begin{table}[h]
\centering\small
\caption{Result files and the artifacts they produce.}
\label{tab:repro}
\begin{tabular}{l l}
\toprule
File (\texttt{anc/results/}) & Artifact \\
\midrule
\texttt{headline.csv}            & \Cref{tab:headline} \\
\texttt{anomaly\_vs\_capacity.csv} & \Cref{fig:anomaly} \\
\texttt{scaling.csv}             & \Cref{fig:memory}, \Cref{fig:index} \\
\texttt{fold.csv}                & \Cref{fig:fold} \\
\texttt{latency.csv}             & \Cref{fig:latency} \\
\texttt{window.csv}, \texttt{adversarial.csv} & \Cref{fig:window} \\
\texttt{snapshot\_overhead.csv}  & \Cref{fig:snapshot} \\
\texttt{pmap.csv}, \texttt{range.csv} & \Cref{fig:pmap} \\
\texttt{sweep.csv}               & \Cref{fig:sweep} \\
\texttt{frag\_vs\_edits.csv}, \texttt{frag\_latency.csv} & \Cref{fig:frag} \\
\bottomrule
\end{tabular}
\end{table}

{\small
\bibliographystyle{plainnat}
\bibliography{refs}
}

\end{document}